\def\P{\mathbb{P}}
\def\qt{\quad\times}
\def\qa{\quad+}
\def\qs{\quad-}
\def\s0{\mathrm{s}_0}
\def\Gtml{G_{\mathrm{t}}^{\mathrm{ml}}}
\def\Gtsl{G_{\mathrm{t}}^{\mathrm{sl}}}
\def\Gr{G_{\mathrm{r}}}
\def\Grv{G_{\mathrm{r,v}}}
\def\Grvmax{G_{\mathrm{r,v}}^{\mathrm{max}}}
\def\Grh{G_{\mathrm{r,h}}}
\def\themin{\theta_{\mathrm{min}}}
\def\Pvis{P_{\mathrm{vis}}}
\def\Pserml{P_{\mathrm{ser}}^{\mathrm{ml}}}
\def\Psersl{P_{\mathrm{ser}}^{\mathrm{sl}}}
\def\Pinv{P_{\mathrm{inv}}}
\def\Pout{P_{\mathrm{out}}}
\def\Poutml{P_{\mathrm{out}}^{\mathrm{ml}}}
\def\Poutsl{P_{\mathrm{out}}^{\mathrm{sl}}}
\def\Cml{\mathcal{C}_{\mathrm{ml}}}
\def\Csl{\mathcal{C}_{\mathrm{sl}}}
\def\mcV{\mathcal{V}}
\def\Psermlapp{\bar{P}_{\mathrm{ser}}^{\mathrm{ml}}}
\def\Pserslapp{\bar{P}_{\mathrm{ser}}^{\mathrm{sl}}}
\def\Pinvapp{\bar{P}_{\mathrm{inv}}}
\def\Poutapp{\bar{P}_{\mathrm{out}}}
\def\Poutmlapp{\bar{P}_{\mathrm{out}}^{\mathrm{ml}}}
\def\Poutslapp{\bar{P}_{\mathrm{out}}^{\mathrm{sl}}}
\def\PvoidBPP{P_{\mathrm{void}}^{\mathrm{BPP}}}
\def\PvoidPPP{P_{\mathrm{void}}^{\mathrm{PPP}}}
\def\re{r_{\mathrm{e}}}
\def\BPP{\Phi}
\def\BPPml{\Phi_{\mathrm{vis}}^{\mathrm{ml}}}
\def\BPPsl{\Phi_{\mathrm{vis}}^{\mathrm{sl}}}
\def\BPPvis{\Phi_{\mathrm{vis}}}
\def\BPPinv{\Phi_{\mathrm{inv}}}
\def\PPP{\bar{\Phi}}
\def\A{\mathcal{A}}
\def\Avis{\mathcal{A}_{\mathrm{vis}}}
\def\Aml{\mathcal{A}_{\mathrm{vis}}^{\mathrm{ml}}}
\def\Asl{\mathcal{A}_{\mathrm{vis}}^{\mathrm{sl}}}
\def\SBPP{\mathcal{S}({\A})}
\def\SBPPvis{\mathcal{S}({\Avis})}
\def\SBPPml{\mathcal{S}({\Aml})}
\def\SBPPsl{\mathcal{S}({\Asl})}
\def\psmax{\psi_{\mathrm{max}}}
\def\psth{\psi_{\mathrm{th}}}
\def\dmax{d_{\mathrm{max}}}
\def\dth{d_{\mathrm{th}}}
\def\wth{\omega_{\mathrm{th}}}
\def\delequal{\mathrel{\ensurestackMath{\stackon[1pt]{=}{\scriptstyle\Delta}}}}
\newtheorem{thm}{Theorem}
\newtheorem{lem}{Lemma}
\newtheorem{cor}{Corollary}
\newtheorem{rem}{Remark}
\xpatchcmd{\proof}{\hskip\labelsep}{\hskip5\labelsep}{}{}
\newcounter{myeqncount} 
\begin{document}
\title{Performance Analysis of Satellite Communication System Under the Shadowed-Rician Fading: \\A Stochastic Geometry Approach}

\author{Dong-Hyun Jung, Joon-Gyu Ryu, Woo-Jin Byun, and Junil Choi\\
\thanks{This work was supported by Institute of Information \& communications Technology Planning \& Evaluation (IITP) grant funded by the Korea government (MSIT) (No.2021-0-00847, Development of 3D Spatial Satellite Communications Technology).}
\thanks{D.-H. Jung is with the School of Electrical Engineering, KAIST, and with the Radio and Satellite Research Division, Communication and Media Research Laboratory, Electronics and Telecommunications Research Institute, Daejeon, South Korea (e-mail: donghyunjung@kaist.ac.kr).}
\thanks{J.-G. Ryu and W.-J. Byun are with the Radio and Satellite Research Division, Communication and Media Research Laboratory, Electronics and Telecommunications Research Institute, Daejeon, South Korea (e-mail: \{jgryurt, wjbyun\}@etri.re.kr).}
\thanks{J. Choi is with the School of Electrical Engineering, KAIST, Daejeon, South Korea (e-mail: junil@kaist.ac.kr).}
\vspace{-0.5cm}
}
\maketitle

\begin{abstract}
In this paper, we consider downlink low Earth orbit (LEO) satellite communication systems where multiple LEO satellites are uniformly distributed over a sphere at a certain altitude according to a homogeneous binomial point process (BPP). Based on the characteristics of the BPP, we analyze the distance distributions and the distribution cases for the serving satellite. We analytically derive the exact outage probability, and the approximated expression is obtained using the Poisson limit theorem. With these derived expressions, the system throughput maximization problem is formulated under the satellite-visibility and outage constraints. To solve this problem, we reformulate it with bounded feasible sets and propose an iterative algorithm to obtain near-optimal solutions. Simulation results perfectly match the derived exact expressions for the outage probability and system throughput. The analytical results of the approximated expressions are fairly close to those of the exact ones. It is also shown that the proposed algorithm for the throughput maximization is very close to the optimal performance obtained by a two-dimensional exhaustive search.

\textbf{\emph{Index terms}} --- Satellite communications, Poisson limit theorem, outage probability, throughput maximization, stochastic geometry.
\\
\end{abstract}

\IEEEpeerreviewmaketitle

\section{Introduction}\label{sec:intro}
\IEEEPARstart{S}{atellite} communications have recently attracted significant attention as a solution to provide global coverage without deploying base stations, which requires high cost.
The 3rd Generation Partnership Project (3GPP) is trying to include non-terrestrial networks (NTNs) as a part of the fifth generation (5G) standard, which considers flying objects as entities in 5G networks such as geostationary orbit (GEO) satellites, low Earth orbit (LEO) satellites, and high altitude platform stations (HAPSs) [\ref{Ref:3GPP_38.811}]. 
The goal of the standardization is to integrate satellites and HAPSs into terrestrial networks (TNs) in order to provide communication services to both terrestrial users without any infrastructure nearby and flying objects such as airplanes, drones, and vehicles for urban air mobility.

There are some challenges to directly integrate satellites into the TNs.
The high altitudes of satellites, e.g., $35,786$ km for GEO satellites and $300 - 2,000$ km for LEO satellites, cause long propagation delays.
In addition, the large beam coverage, which is often considered as an advantage of the NTNs, makes different round trip delays between the nearest and the farthest terminals from a satellite. 
In the initial access procedure, the large amount of timing difference may require a larger preamble receiving window and a longer period of random access channel occasions for timing synchronization.
A large amount of Doppler shift and drift is another big challenge for NTNs since the satellites have to move along the orbits with a certain velocity, e.g., more than $25,000$ km/s for LEO satellites, to keep the orbit against the Earth's gravity.
Such problems have been actively investigated in the 3GPP standard [\ref{Ref:3GPP_38.821}].

\subsection{Related Works}\label{sec:intro:relatedWork}
Integration between satellite and terrestrial networks has been investigated in [\ref{Ref:Lin1}]-[\ref{Ref:Zhen}].
A beamforming scheme for cognitive satellite-terrestrial networks was proposed in [\ref{Ref:Lin1}] where a base station and a cooperative terminal are exploited to enhance the secrecy performance.
The sum rate maximization problem for satellite and aerial-integrated terrestrial networks was solved in [\ref{Ref:Lin2}] where multicast communications with rate-splitting multiple access were considered. 
A secure beamforming scheme for cognitive satellite-terrestrial networks was proposed in [\ref{Ref:Lin3}] to maximize secrecy energy efficiency.
Satellite-integrated 5G networks were specifically considered in [\ref{Ref:Guidotti}]-[\ref{Ref:Zhen}].
Technical challenges and impacts of the satellite channel characteristics on the physical and medium access control layers were discussed based on the 3GPP NTN architecture in [\ref{Ref:Guidotti}].
A load balancing algorithm was proposed for multi-radio access technology networks including non-terrestrial and terrestrial networks in [\ref{Ref:Shahid}].
A new preamble design for random access and a preamble detection scheme were proposed in [\ref{Ref:Zhen}] to tolerate the large difference between the round trip delays the nearest and the farthest terminals experience from a serving satellite.

For modeling satellite channels, the shadowed-Rician fading model was proposed in [\ref{Ref:Loo}], which has been proved its suitability in various frequency bands, e.g., the UHF-band, L-band, S-band, and Ka-band.
A simpler model for the shadowed-Rician fading by using the Nakagami distribution for the amplitude of line-of-sight (LOS) component was proposed in [\ref{Ref:Abdi}] and was widely adopted to analyze the system performance of satellite communication systems [\ref{Ref:Jung}]-[\ref{Ref:An}].
The outage probability of shared-band on-board processing satellite communication systems was analyzed in [\ref{Ref:Jung}] under the shadowed-Rician fading where the satellite has digital processing capability.
In [\ref{Ref:Bhatnagar}], the approximated closed-form expressions for the probability density function (PDF) and cumulative distribution function (CDF) of received signal-to-noise power ratio (SNR) were analyzed under the shadowed-Rician fading where maximum ratio combining is used for multi-antenna reception.
The approximated bit error rate and outage probability of decode-and-forward relaying-based satellite communication systems were derived in [\ref{Ref:Bhatnagar2}] where a source and a destination are equipped with multiple antennas, each experiences independent and identically distributed shadowed-Rician fading.
For satellite-terrestrial relay networks, the outage probability and ergodic capacity were analyzed in [\ref{Ref:Bankey}] with opportunistic user scheduling, and the ergodic capacity was studied in [\ref{Ref:An}] for two adaptive transmission schemes.
Although the shadowed-Rician channel model is appropriate for satellite channels in various frequency bands, there are not many system-level analyses under the shadowed-Rician fading.

Stochastic geometry is a popular analytical tool for estimating system-level performance of communication systems [\ref{Ref:Andrews}]-[\ref{Ref:Kolawole}].
In stochastic geometry-based analyses, nodes are usually distributed according to Poisson point processes (PPPs), i.e., the number of nodes is randomly determined by the Poisson distribution, and the nodes are uniformly located on the infinite two-dimensional area. 
However, when the number of nodes distributed in the networks is finite, e.g., satellite networks, the randomness of the positions of nodes should be modeled by using a finite point process other than the PPP [\ref{Ref:Guo}].

Binomial point process (BPP) is a finite point process that describes the distribution of the finite number of points on a finite area where each point exists in a certain area by the binomial distribution [\ref{Ref:Book:Chiu}].
The positions of LEO satellite constellations can be modeled as the BPP at a certain altitude because the satellites may look randomly distributed due to their fast mobility and various types of orbits [\ref{Ref:Okati}]-[\ref{Ref:Talgat2}]. The work [\ref{Ref:Okati}] showed that the BPP well models practical LEO satellite constellations from coverage and rate perspectives.
In [\ref{Ref:Talgat}], the user coverage probability of LEO satellite communication systems was studied where gateways act as relays between users and LEO satellites.
The distance distributions for gateway-satellite and inter-satellite links were studied in [\ref{Ref:Talgat2}].
However, in these works, the exact outage probability and throughput of LEO satellite communication systems using the stochastic geometry have not been analyzed under the shadowed-Rician fading.

\begin{figure*}[!t]
\centering
\subfigure[]{
\includegraphics[width=0.95\columnwidth]{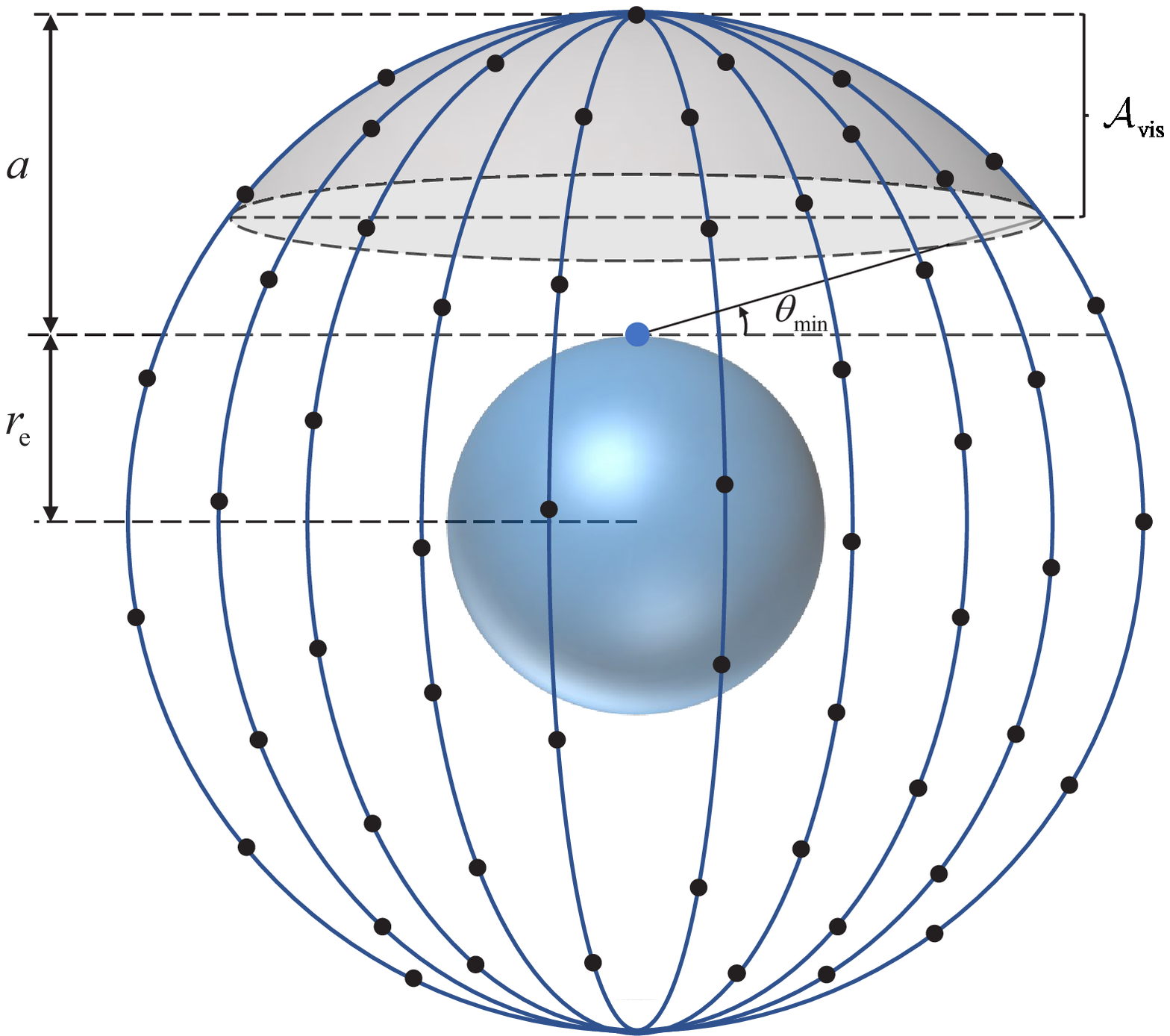}
\label{Fig:System_model_1}
}
\subfigure[]{
\includegraphics[width=0.95\columnwidth]{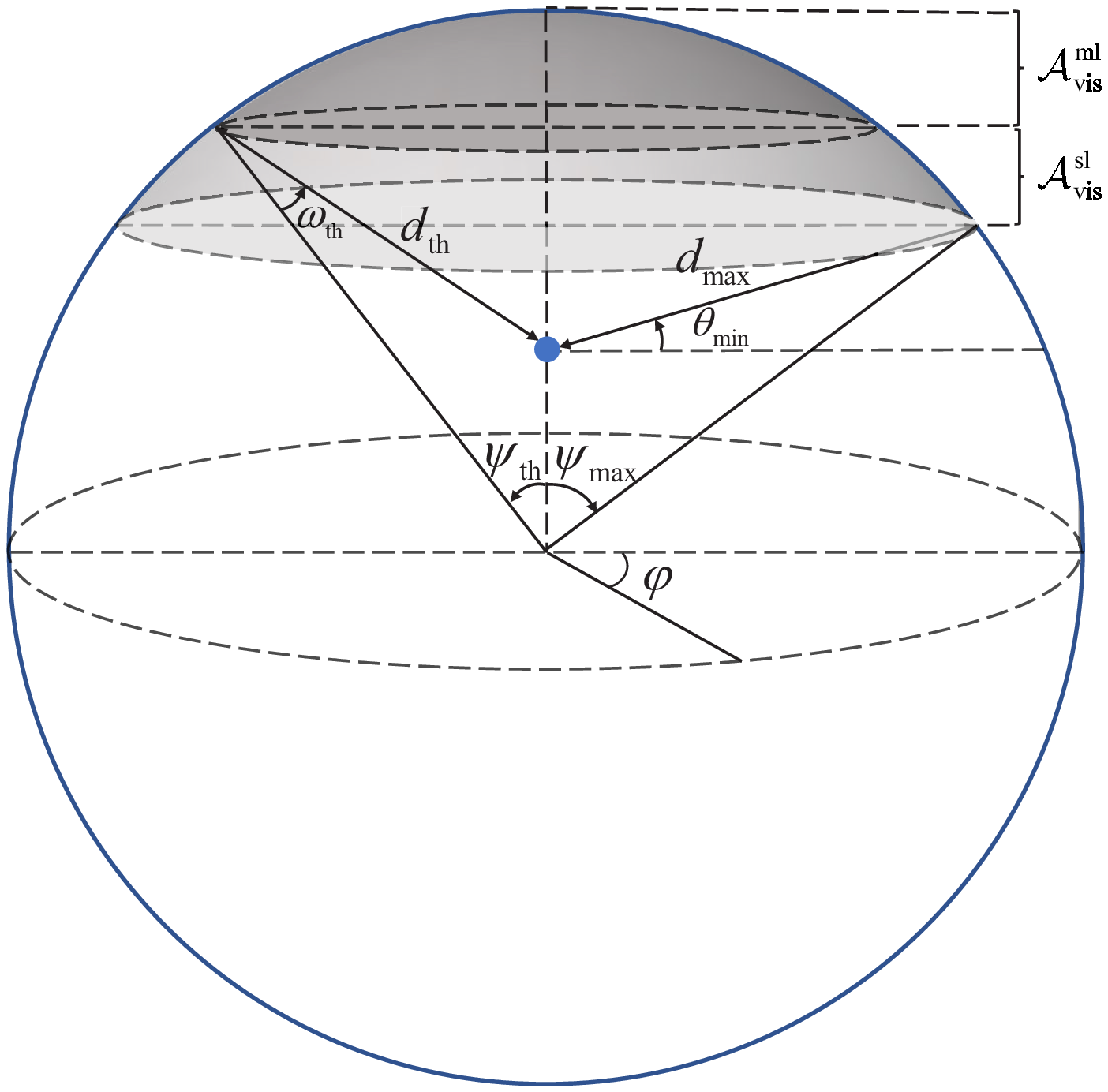}
\label{Fig:System_model_2}
}
\caption{(a) System model and (b) parameter description. Blue sphere, blue dot, and black dots indicate the Earth, the terminal, and the satellites, respectively.}
\label{Fig:System_model}
\end{figure*}

\subsection{Contributions}\label{sec:intro:Cont}
In this paper, we consider a downlink LEO satellite communication system where multiple satellites are distributed according to a homogeneous BPP. Under the shadowed-Rician fading, the exact performance of system is derived and the approximated performance is provided for mathematical tractability.
The main contributions of the paper are summarized as follows:
\begin{itemize}
    \item We adopt directional beamforming with fixed beam antennas for the satellites that maintain the boresight fixed in the direction of the subsatellite point (the nearest point on the Earth). For mathematical tractability, we use sectorized beam patterns with two sectors: main and side lobes with constant gains. Different from [\ref{Ref:Okati}] and [\ref{Ref:Talgat}], we consider the impact of the beam pattern on the system performance.
    \item We derive the distributions of three distances: (i) distance to the nearest satellite, (ii) distance to the serving satellite whose main lobe is directed to the terminal, and (iii) distance to the serving satellite whose side lobe is directed to the terminal. These distance distributions are a key to perform the stochastic geometry-based analyses.
    \item We analyze the three distribution cases for the serving satellites. The probabilities of these cases are derived using the void probability of the BPP. 
    These probabilities are essential to derive the system performance but were not considered in [\ref{Ref:Talgat}].
    \item We derive the exact expressions for the outage probability and system throughput in closed forms, considering the distribution cases for the serving satellite. We also derive the approximated ones using the Poisson limit theorem.
    \item With the derived expressions, we formulate the system throughput maximization problem under the satellite-visibility and outage constraints. To solve the problem, we reformulate it with bounded feasible sets and propose an iterative algorithm. We also analyze the computational complexity of the proposed algorithm with both exact and approximated expressions. 
    \item  Finally, we numerically show that the derived exact expressions perfectly match Monte-Carlo simulations, and the approximated ones are also fairly close. It is also shown that the proposed algorithm for the throughput maximization has close performance to the optimal solutions.
\end{itemize}

The rest of this paper is organized as follows.
In Section \ref{sec:model}, the system and channel models for a satellite communication system are described.
In Section \ref{sec:distDist}, the surface areas and distance distributions are analyzed using the characteristics of the BPP.
In Section \ref{sec:exact}, we derive the exact outage probability and system throughput in closed forms.
In Section \ref{sec:app}, we obtain approximated expressions for the system performance based on the Poisson limit theorem.
In Section \ref{sec:thruMax}, we propose an iterative algorithm for throughput maximization and compute the computational complexity.
In Section \ref{sec:sim}, simulation results are provided, and conclusions are drawn in Section~\ref{sec:con}.

\textbf{Notation:}
$\P[\cdot]$ indicates the probability measure.
The CDF and the PDF of a random variable $X$ are $F_X(x)$ and $f_x(x)$, respectively.
The empty set is $\emptyset$, and the complement of a set $\mathcal{X}$ is $\mathcal{X}^{\mathrm{c}}$.
The surface area of a region $\mathcal{X}$ is $\mathcal{S}({\mathcal{X}})$.
$\Gamma(\cdot)$ is the Gamma function, and the Pochhammer symbol is defined as $(x)_n=\Gamma(x+n)/\Gamma(x)$.
The lower incomplete Gamma function is defined as $\gamma(a, x)=\int_0^x t^{a-1}\exp(-t)dt$.
$\binom{n}{k}$ denotes the binomial coefficient.
The inverse function of $f(\cdot)$ is $f^{-1}(\cdot)$.

\section{System and Channel Models}\label{sec:model}
Consider a downlink LEO satellite communication system with $S$ satellites at altitude $a$ communicating with terminals on the Earth as shown in Fig. \ref{Fig:System_model_1}.
We assume that $S$ satellites are uniformly distributed on a surface of a sphere with the radius $\re+a$ according to a homogeneous BPP ${\BPP}$ with density $\lambda_{\mathrm{s}}$ where $\re$ is the radius of the Earth [\ref{Ref:Okati}].
The surface where the satellites are distributed can be expressed with spherical coordinates as $\A=\{\rho=\re+a, 0\le\psi\le\pi, 0 \le\varphi\le 2\pi\}$ where $\rho$, $\psi$, and $\varphi$ are the radial distance, polar angle, and azimuthal angle, respectively.
Assume that the satellites work as base stations where the satellites are connected to core networks via wireless backhaul [\ref{Ref:3GPP_38.821}], and a terminal is associated with one of \textit{visible satellites}.
The visible satellites are located above the minimum elevation angle $\themin$, i.e., a pre-defined elevation angle above which the terminal can be served by a satellite.
The distance between the terminal and any visible satellite should be less than the maximum distance $\dmax$, which is obtained as $\dmax = \sqrt{\re^2\sin^2\themin + a^2 + 2 \re a}-\re\sin\themin$ by the law of cosines, 
\begin{align}\label{eq:law_of_cosines}
(\re+a)^2 = \dmax^2 + \re^2 + 2 \dmax \re \sin\themin.
\end{align}
The BPP of the satellites $\BPP$ can be divided into two sets: a set of the visible satellites $\BPPvis$ and a set of the invisible satellites $\BPPinv$.
The surface area where the satellites in $\BPPvis$ can be located is a spherical cap, shown as the shaded area in Fig. \ref{Fig:System_model_1}, which can be expressed as $\Avis = \{\rho=\re+a, 0\le\psi\le\psmax, 0 \le\varphi\le 2\pi\}$ where the maximum polar angle $\psmax$, below which the terminal can see the satellites, is obtained by the law of cosines as 
\begin{align}
\psmax
    =\cos^{-1}\left(\frac{\re^2+(\re + a)^2-\dmax^2}{2 \re (\re + a)}\right).    
\end{align}

As considered in the 3GPP NTN standard [\ref{Ref:3GPP_38.821}], two types of the terminals are assumed: (i) very-small-aperture terminal (VSAT) and (ii) handheld terminal, operated in Ka and S-bands, respectively.\footnote{In the Ka-band, to compensate the large path-loss and the rain attenuation, the VSAT terminals with several-meters antenna are typically used, while in the S-band, the light and portable handheld terminals are preferable.}
Without loss of generality, we analyze the downlink performance of a typical terminal located at a fixed position [\ref{Ref:Andrews}].
It is assumed that the interference from other satellites is negligible at the terminal thanks to interference management techniques such as frequency reuse and beamforming techniques [\ref{Ref:Talgat}].

Directional beamforming with fixed-beam antennas is adopted at the satellites, i.e., the satellites maintain the boresight of their beams in the direction of the subsatellite point.
Tapered-aperture antennas are used to model practical beam patterns of satellites as in [\ref{Ref:Huang1}]-[\ref{Ref:Zhang}].
However, for mathematical tractability, we assume that the satellites have sectorized beam patterns\footnote{The sectorized beam patterns are simplified versions of the practical beam patterns where they were widely adopted for theoretical analyses using the stochastic geometry [\ref{Ref:Alkhateeb}]-[\ref{Ref:Dabiri}].} 
where the antenna gains of the main and side lobes are $\Gtml$ and $\Gtsl$, respectively.
Let $\omega_s$, $s \in \BPPvis$, denote the angle between the terminal and the boresight direction of the satellite $s$.
Then, the transmit antenna gain of the satellite~$s$ is given by
\begin{align}
G_{\mathrm{t},s}
    =& \begin{cases} 
    \Gtml, & \mbox{if  } |\omega_s| \le \wth,\\
    \Gtsl, & \mbox{otherwise} 
    \end{cases}
\end{align}
where $\wth$ is the threshold angle between the main and side lobes of the beam pattern.
We assume that the VSAT terminal has a directional antenna with the gain $\Grv$, while the handheld terminal has an omnidirectional antenna with the gain $\Grh$.
The VSAT terminal attempts to track the serving satellite's trace for antenna beam-pointing but there may be a pointing error $\omega_{\mathrm{e}}$, i.e., the difference between the boresight and the direction to the serving satellite.
Then, the receive antenna gain of the VSAT is given by [\ref{Ref:Zhang}]
\begin{align}
\Grv 
    =& \begin{cases} 
        \Grvmax, & \mbox{if   } 0^\circ \le \omega_{\mathrm{e}} < 1^\circ,\\ 
        10^{3.2-2.5 \log \omega_{\mathrm{e}}}, & \mbox{if   } 1^\circ \le \omega_{\mathrm{e}} < 48^\circ,\\
        0.1, & \mbox{if   } 48^\circ \le \omega_{\mathrm{e}} < 180^\circ
     \end{cases}
\end{align}
where $\Grvmax$ is the maximum receive antenna gain.

For the Ka-band, the rain attenuation is usually modeled as lognormal distribution.
However, since we only focus on a typical terminal at a fixed position, the rain fading that all satellites experience in the satellite-terminal links is assumed to be identical and constant [\ref{Ref:Talgat}], [\ref{Ref:Na}], [\ref{Ref:Zheng}].
Thus, the rain attenuation for the Ka-band is given by $g_s=g$ for all satelllites $s \in \BPPvis$, while for the S-band, the rain attenuation is negligible, i.e., $g_s=g=1$.

The shadowed-Rician fading is assumed for the channels between the terminal and satellites, which is widely adopted for satellite channels in both S and Ka-bands [\ref{Ref:Jung}]-[\ref{Ref:Bhatnagar2}]. 
Let $h_{s}$ denote the channel gain between the terminal and the satellite~$s$. Then, the CDF of the channel gain is given by~[\ref{Ref:Abdi}]
\begin{equation}\label{eq:CDF_ch_gain}
{F_{h_{s}}}(x) = K\sum\limits_{n = 0}^\infty  {\frac{{{{(m)}_n}{\delta ^n}{{(2b)}^{1 + n}}}}{{{{(n!)}^2}}}}\gamma\left(1+n,\frac{x}{2b}\right)
\end{equation}
where $K = {\left({2bm}/{(2bm + \Omega) }\right)^m}/{2b}$, $\delta  = (\Omega /(2bm + \Omega ))/2b$ with $\Omega$ being the average power of LOS component, $2b$ is the average power of the multi-path component except the LOS component, and $m$ is the Nakagami parameter. 
We also assume that Doppler shifts caused by fast mobility of LEO satellites can be perfectly compensated using proper estimation techniques based on the satellite ephemeris information, e.g., the types of orbits, altitudes, positions, and velocity of satellites, which can be accurately known in prior [\ref{Ref:Arti}],~[\ref{Ref:Guo2}].

Let $d_s$ be the distance between the terminal and the satellite~$s$. 
Then, the path-loss between the terminal and the satellite~$s$ is given by
$\ell(d_s) =  \left(\frac{c}{4\pi f_{\mathrm{c}}}\right)^2 d_s^{-\alpha}$
where $c$ is the speed of light, $f_{\mathrm{c}}$ is the carrier frequency, and $\alpha$ is the path-loss exponent.
For the signals transmitted from the satellite $s$, the SNR at the terminal is given by
$\gamma_s =  \frac{P g G_{\mathrm{t},s} \Gr h_s \ell(d_s)}{N_0 W}$ where $P$ is the transmit power of the satellite, $\Gr$ is the receive antenna gain of the terminal, i.e., $\Grv$ or $\Grh$, $N_0$ is the noise power spectral density, and $W$ is the bandwidth.

\section{Surface Areas and Distance Distributions}\label{sec:distDist}
In this section, we first divide the area where the visible satellites are located, $\Avis$, into two areas and then calculate the surface areas of interest.
We also obtain the distribution of the distance to the nearest satellite and that to the serving satellite.

\subsection{Surface Areas of Interest}\label{sec:distDist:surf}
The set of visible satellites $\BPPvis$ is further separated into two sets $\BPPml$ and $\BPPsl$ consisting of the visible satellites whose main and side lobes are directed towards the terminal, respectively.
The surface areas where the satellites in $\BPPml$ and $\BPPsl$ can be located, are denoted by $\Aml$ and $\Asl$, respectively, and shown in Fig. \ref{Fig:System_model_2}.
The threshold polar angle differentiating $\Aml$ and $\Asl$ is given by
\begin{align}
\psth
    = \sin^{-1}\left(\frac{\re+a}{\re} \sin\wth\right)-\wth.
\end{align}
The regions $\Aml$ and $\Asl$ are expressed with respect to $\psth$ as $\Aml = \{\rho=\re+a, 0\le\psi\le\psth, 0 \le\varphi\le 2\pi\}$ and $\Asl = \{\rho=\re+a, \psth\le\psi\le\psmax, 0 \le\varphi\le 2\pi\}$, respectively.

Now, we obtain the surface areas of $\A$, $\Avis$, $\Aml$, and $\Asl$ in order.
The region where all the satellites are located, $\A$, is a sphere with the radius $\re + a$ whose surface area is given by 
$
    \SBPP = 4\pi(\re+a)^2
$.
The region $\Avis$ is a spherical cap with the radius $\re + a$ whose height is calculated as
$a-\dmax \sin\themin\mathop=(\dmax^2-a^2)/(2\re)$ by using \eqref{eq:law_of_cosines}.
Since the surface area of a spherical cap $\mathcal{X}$ with radius $r$ and height $q$ is given by $\mathcal{S}(\mathcal{X})=2\pi r q$ [\ref{Ref:Book:Polyanin}], the surface area of $\Avis$ is given by
\begin{align}
    \SBPPvis =\frac{\pi(\re+a)(\dmax^2-a^2)}{\re}.
\end{align}
Similarly, the cap height of $\Aml$ is $(\re+a)(1 - \cos\psth)$, so the surface area of $\Aml$ is given by 
\begin{align}
    \SBPPml = 2\pi(\re+a)^2(1 - \cos\psth).
\end{align}
The surface area of $\Asl$ is the difference between $\SBPPvis$ and $\SBPPml$, i.e.,
$
    \SBPPsl = \SBPPvis - \SBPPml
$.

The impact of the main lobe's beamwidth on the $\SBPPml$ and $\SBPPsl$ can be seen from the derivative of $\psth$ with respect to $\wth$, which is given by 
\begin{align}\label{eq:der_psth}
\frac{d\psth}{d\wth} = \frac{(\re+a)\cos\wth}{\sqrt{\re^2-(\re+a)^2\sin^2\wth}}-1 > 0.
\end{align}
In the first term on the right-hand side of \eqref{eq:der_psth}, the numerator is always larger then the denominator for $a>0$, which means that $\psth$ is an increasing function of $\wth$. Thus, as the beamwidth of the main lobe increases, the $\SBPPml$ enlarges and $\SBPPsl$ shrinks.
These surface areas are used to obtain the probabilities of the serving satellite's distributions in Section~\ref{sec:exact:DistCases}.

\begin{figure}
\begin{center}
\includegraphics[width=0.9\columnwidth]{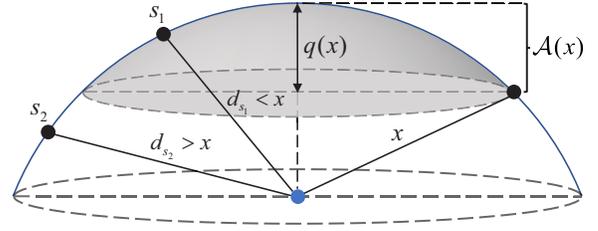}
\end{center}
\setlength\abovecaptionskip{.25ex plus .125ex minus .125ex}
\setlength\belowcaptionskip{.25ex plus .125ex minus .125ex}
\caption{Description of the area  $\mathcal{A}(x)$. $s_1$ and $s_2$ are the satellites whose distance from the terminal (blue dot) is greater than $x$ or less than $x$, respectively.}
\vspace{-0.2cm}
\label{Fig:area_description}
\end{figure}

\subsection{Distribution of Distance to Nearest Satellite}
We denote by $\mathcal{A}(x)$ a spherical cap with the radius $\re+a$ such that the distance between any point on $\mathcal{A}(x)$ and the terminal is less than $x\in [a, 2\re+a]$, which is shown as the shaded areas in Fig. \ref{Fig:area_description}.
By the Pythagorean theorem, the cap height of $\mathcal{A}(x)$ is calculated as $q(x)=(x^2-a^2)/(2\re)$.
The surface area of $\mathcal{A}(x)$ is given by
\begin{align}\label{eq:A_c}
\mathcal{S}(\mathcal{A}(x))
    = 2\pi(\re+a)q(x)
    = \frac{\pi(\re+a)(x^2-a^2)}{\re}.
\end{align}
For example, if $x=a$, the surface area vanishes, i.e., $\mathcal{S}(\mathcal{A}(a))=0$, while if $x=2\re+a$, the surface area becomes a whole sphere with the radius $\re+a$, i.e., $\mathcal{S}(\mathcal{A}(2\re+a))=4\pi(\re+a)^2$.
It is seen from \eqref{eq:A_c} that the surface area is an increasing function of $x$, meaning that the spherical cap becomes enlarged as $x$ increases.
Thus, the probability that the distance between the terminal and the satellite $s$ is less than $x$ is equivalent to the probability that the satellite $s$ is located in $\mathcal{A}(x)$, i.e., the success probability\footnote{Note that the success probability is the probability that a point is located on the area of interest. For homogeneous BPPs, the success probability is obtained as the ratio of the surface area of interest to the total surface area where all points are distributed [\ref{Ref:Book:Chiu}].} for $\mathcal{A}(x)$, which is obtained by the ratio of $\mathcal{S}(\mathcal{A}(x))$ to $\mathcal{S}(\mathcal{A})$ as
\begin{align}\label{eq:cdf_ds}
\P[d_s \le x] 
    = \frac{\mathcal{S}(\mathcal{A}(x))}{\mathcal{S}(\mathcal{A})}
    =\frac{x^2-a^2}{4\re (\re+a)}\delequal\kappa(x).
\end{align}

We denote by $D$ the distance between the terminal and the nearest satellite, of which CDF is obtained by using \eqref{eq:cdf_ds} as
\begin{align}\label{eq:CDF_D}
F_D(x)
    &=  1 - \P[D>x]\nonumber\\
    &=  1 - \prod_{s \in \BPP} \P[d_s>x]\nonumber\\
     &= \begin{cases} 
     0, & \mbox{if } x \le a,\\
    1 - \left(1-\kappa(x)\right)^S, & \mbox{if } a < x \le 2\re+a, \\
     1, & \mbox{if } x > 2\re+a.
     \end{cases}
\end{align}
By differentiating \eqref{eq:CDF_D}, the corresponding PDF is given by 
\begin{align}\label{eq:PDF_D}
f_D(x)
     = \begin{cases} 
     \frac{S x}{2\re(\re+a)} \left(1-\kappa(x)\right)^{S-1}, & \mbox{if } a < x \le 2\re+a, \\
     0, & \mbox{otherwise}.
     \end{cases}
\end{align}

\subsection{Distribution of Distance to Serving Satellite}
Let $Y$ denote the distance between the terminal and the serving satellite, given that the serving satellite is in $\Aml$, i.e., $\BPPml \neq \emptyset$.
Then, the CDF of $Y$ is given by 
\begin{align}\label{eq:CDF_Y1}
F_Y(x) 
    = \P[Y \le x]
    &= \P[D \le x|\BPPml \neq \emptyset]\nonumber\\
    &= \P[D \le x|D < \dth]
    = \frac{\P[D \le x,D < \dth]}{\P[D < \dth]}
\end{align}
where $\dth = \sqrt{\re^2 + (\re + a)^2 - 2 \re (\re + a) \cos\psth}$ is the distance between the terminal and the satellite located at the boundary between $\Aml$ and $\Asl$.
Using \eqref{eq:CDF_D}, the CDF becomes
\begin{align}\label{eq:CDF_Y}
F_Y(x) 
    = \begin{cases} 
    0, & \mbox{if  } x \le a,\\
    \frac{F_D(x)}{F_D(\dth)}, & \mbox{if  } a < x \le \dth,  \\
    1, & \mbox{if  } x > \dth.
    \end{cases}
\end{align}
The PDF of $Y$ is obtained by differentiating \eqref{eq:CDF_Y} as
\begin{align}\label{eq:PDF_Y}
f_Y(x) 
    = \begin{cases} 
    \frac{f_D(x)}{F_D(\dth)}, & \mbox{if  } a < x \le \dth,  \\
    0, & \mbox{otherwise}.
    \end{cases}
\end{align}

Similarly, we denote by $Z$ the distance between the terminal and the serving satellite, given that the serving satellite is in $\Asl$, i.e., $\BPPml = \emptyset$ and $\BPPsl \neq \emptyset$.
With the fact that $\P[\BPPml =\emptyset, \BPPsl \neq \emptyset]=\P[\dth < D < \dmax]$, the CDF and PDF of $Z$ are respectively given by 
\begin{align}\label{eq:CDF_Z}
F_Z(x) 
    &= \P[Z \le x]\nonumber\\
    &= \P[D \le x|\dth < D < \dmax]\nonumber\\
    &= \begin{cases} 
    0, & \mbox{if  } x \le \dth,\\
    \frac{F_D(x)-F_D(\dth)}{F_D(\dmax)-F_D(\dth)}, & \mbox{if  } \dth < x \le \dmax,  \\
    1, & \mbox{if  } x > \dmax,
    \end{cases}
\end{align}
and
\begin{align}\label{eq:PDF_Z}
f_Z(x) 
    = \begin{cases} 
    \frac{f_D(x)}{F_D(\dmax)-F_D(\dth)}, & \mbox{if  } \dth < x \le \dmax,  \\
    0, & \mbox{otherwise}.
    \end{cases}
\end{align}
The derived distance distributions will be used to obtain the exact system performance in the following section.

\begin{figure*}[t]
\setcounter{myeqncount}{\value{equation}}
\setcounter{equation}{22}
\begin{align}\label{eq:Poutml_fin}
\Poutml
    &= \frac{2 K S}{(4\re(\re+a))^S F_D(\dth)}\sum\limits_{n = 0}^\infty  {\frac{{{{(m)}_n}{\delta ^n}{{(2b)}^{1 + n}}}}{{{{(n!)}^2}}}} \sum_{k=0}^{S-1}{\binom{S-1}{k}}(a+2\re )^{2(S-1-k)}(-1)^k \left(\frac{\dth^{2(k+1)}}{2(k+1)}\gamma\left(1+n,\frac{w_1\dth^{\alpha}}{2b}\right)\right. \nonumber\\
    &\qs \left. \frac{a^{2(k+1)}}{2(k+1)}\gamma\left(1+n,\frac{w_1 a^{\alpha}}{2b}\right) - \frac{(2b/w_1)^{\frac{2(k+1)}{\alpha}}}{2(k+1)}\left(\gamma\left(1+n+\frac{2(k+1)}{\alpha},\frac{w_1\dth^{\alpha}}{2b}\right)-\gamma\left(1+n+\frac{2(k+1)}{\alpha},\frac{w_1 a^{\alpha}}{2b}\right)\right)\right)
\end{align}
\normalsize \hrulefill \vspace*{4pt}
\begin{align}\label{eq:Poutsl_fin}
\Poutsl
    &= \frac{2 K S}{(4\re(\re+a))^S(F_D(\dmax)-F_D(\dth))}\sum\limits_{n = 0}^\infty  {\frac{{{{(m)}_n}{\delta ^n}{{(2b)}^{1 + n}}}}{{{{(n!)}^2}}}} \sum_{k=0}^{S-1}{\binom{S-1}{k}}(a+2\re )^{2(S-1-k)}(-1)^k \left(\frac{\dmax^{2(k+1)}}{2(k+1)}\gamma\left(1+n,\frac{w_2\dmax^{\alpha}}{2b}\right)\right. \nonumber\\
    &\qs \left. \frac{\dth^{2(k+1)}}{2(k+1)}\gamma\left(1+n,\frac{w_2 \dth^{\alpha}}{2b}\right) - \frac{(2b/w_2)^{\frac{2(k+1)}{\alpha}}}{2(k+1)}\left(\gamma\left(1+n+\frac{2(k+1)}{\alpha},\frac{w_2\dmax^{\alpha}}{2b}\right)-\gamma\left(1+n+\frac{2(k+1)}{\alpha},\frac{w_2 \dth^{\alpha}}{2b}\right)\right)\right)
\end{align}
\normalsize \hrulefill \vspace*{4pt}
\setcounter{equation}{\value{myeqncount}}
\end{figure*}

\section{Exact Performance Analyses}\label{sec:exact}
In this section, we first identify three possible distribution cases for the serving satellite and then derive the probabilities of these cases based on the characteristics of the BPP.
We also analytically derive the exact expression for the outage probability.

\subsection{Distribution Cases For Serving Satellite}\label{sec:exact:DistCases}
Since the satellites are randomly distributed over the sphere, there can be three possible cases for the serving satellite's distribution as follows:
\begin{itemize}
    \item Case 1: The serving satellite is in $\Aml$.
    \item Case 2: The serving satellite is in $\Asl$.
    \item Case 3: There is no serving satellite, i.e., all satellites are invisible.
\end{itemize}
By using the void probability of the BPP, i.e., the probability that there is no point in a certain region, the probabilities of the three cases are obtained in the following lemma.

\begin{lem}\label{lem:Prob}
The probabilities of the three distribution cases for the serving satellite are respectively given by
\begin{align}\label{eq:Pserml}
\Pserml
    &= 1 - \left(\frac{1+\cos\psth}{2}\right)^S,
\end{align}
\begin{align}\label{eq:Psersl}
\Psersl
    &= \left(\frac{1+\cos\psth}{2}\right)^S - \left(1-\frac{\dmax^2 - a^2}{4\re(\re + a)}\right)^S,
\end{align}
and
\begin{align}\label{eq:Pinv}
\Pinv
    = \left(1-\frac{\dmax^2 - a^2}{4\re(\re + a)}\right)^S.
\end{align}
\end{lem}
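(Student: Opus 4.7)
The plan is to recast each of the three cases as a void-type event on the restriction of $\BPP$ to the nested subregions $\Aml\subset\Avis\subset\A$, and then invoke the void probability of a homogeneous BPP. Since $\BPP$ places $S$ points independently and uniformly on $\A$, for any measurable $\mathcal{B}\subseteq\A$,
\begin{align*}
\P[\BPP(\mathcal{B})=0]=\left(1-\frac{\mathcal{S}(\mathcal{B})}{\SBPP}\right)^{S}.
\end{align*}
The key preliminary observation is that the Euclidean distance from the terminal to a point on $\A$ is a strictly increasing function of its polar angle $\psi$ (by the law of cosines), so the nearest satellite sits in $\Aml$ iff some satellite has polar angle at most $\psth$, i.e., iff $\BPPml\neq\emptyset$. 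Consequently, Case~1 coincides with $\{\BPPml\neq\emptyset\}$, Case~3 with $\{\BPPvis=\emptyset\}$, and Case~2 with $\{\BPPml=\emptyset\}\setminus\{\BPPvis=\emptyset\}$, since $\Aml$ and $\Asl$ are disjoint with $\Aml\cup\Asl=\Avis$.

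Given these identifications, the computation is routine. For Case~3, $\Pinv=(1-\SBPPvis/\SBPP)^{S}$ together with $\SBPPvis=\pi(\re+a)(\dmax^{2}-a^{2})/\re$ and $\SBPP=4\pi(\re+a)^{2}$ yields \eqref{eq:Pinv} immediately. For Case~1, $\Pserml=1-(1-\SBPPml/\SBPP)^{S}$, and substituting $\SBPPml=2\pi(\re+a)^{2}(1-\cos\psth)$ collapses the ratio to $(1-\cos\psth)/2$, giving $\Pserml=1-((1+\cos\psth)/2)^{S}$, which is \eqref{eq:Pserml}. For Case~2, the disjointness above gives $\Psersl=\P[\BPPml=\emptyset]-\Pinv$, and the two previously evaluated terms combine to \eqref{eq:Psersl}.

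The main (and only mildly subtle) obstacle is justifying the equivalence between the three geometric cases and the three BPP events; once this is in place, the rest is a direct application of the void probability and substitution of the surface areas derived in Section~\ref{sec:distDist:surf}. As a sanity check on internal consistency, the identity $\SBPPml/\SBPP=(1-\cos\psth)/2=(\dth^{2}-a^{2})/(4\re(\re+a))$ can also be verified from $\dth=\sqrt{\re^{2}+(\re+a)^{2}-2\re(\re+a)\cos\psth}$, which recovers the form $F_D(\dth)=1-((1+\cos\psth)/2)^{S}$ used implicitly in \eqref{eq:CDF_Y1}; this alternative route confirms that the lemma is equivalent to writing $\Pserml=F_D(\dth)$, $\Psersl=F_D(\dmax)-F_D(\dth)$, and $\Pinv=1-F_D(\dmax)$ with $F_D$ as in \eqref{eq:CDF_D}.
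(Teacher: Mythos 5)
Your proposal is correct and follows essentially the same route as the paper's proof in Appendix A: express the three cases as void/non-void events on $\Aml$ and $\Avis$, apply the BPP void probability $\left(1-\mathcal{S}(\mathcal{X})/\SBPP\right)^S$, and substitute the surface areas from Section III-A, with Case 2 obtained as $\P[\BPPml=\emptyset]-\P[\BPPvis=\emptyset]$. Your explicit justification that the nearest satellite lies in $\Aml$ iff $\BPPml\neq\emptyset$ (via monotonicity of distance in polar angle) is a detail the paper leaves implicit, but it does not change the argument.
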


\begin{proof}[Proof:\nopunct]
See Appendix \ref{app:lem_prob}.
\end{proof}

\begin{rem}
It can be seen from $\Pserml$ and $\Psersl$ that as the beamwidth of the satellites' main lobes increases, the terminal is more likely to be associated with the satellite whose main lobe is directed to the terminal. This is because $\psth$ increases with the beamwidth of the satellites.
\end{rem}
\begin{rem}
The satellite-visible probability, i.e., the probability that there exist at least one visible satellite, is $\Pvis=1-\Pinv$. As the minimum elevation angle $\themin$ increases, $\Pvis$ decreases, while $\Pinv$ increases, because the surface area of $\Avis$ shrinks.
\end{rem}

\subsection{Outage Probability}
In this subsection, we analyze the outage probability of the system, assuming that at least one satellite is visible.
An outage occurs when the instantaneous rate between the terminal and the serving satellite falls below a required transmission rate~$R$.
The outage probability of the system is obtained in the following theorem.

\begin{thm}\label{thm:Pout}
The outage probability of the system is given by
\begin{align}\label{eq:P_out_fin}
\Pout
    &= \frac{\Pserml}{1-\Pinv} \Poutml + \frac{\Psersl}{1-\Pinv} \Poutsl
\end{align}
where $\Poutml$ and $\Poutsl$ are the outage probabilities when the serving satellite is in $\Aml$ or $\Asl$, respectively given by \eqref{eq:Poutml_fin} and \eqref{eq:Poutsl_fin} shown at the top of this page with $w_1 = (16 \pi^2 f_{\mathrm{c}}^2 N_0 W(2^{R}-1))/(P g c^2 \Gtml \Gr)$ and $w_2 = (16 \pi^2 f_{\mathrm{c}}^2 N_0 W(2^{R}-1))/(P g c^2 \Gtsl \Gr)$.
\end{thm}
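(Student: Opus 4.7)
The plan is to first reduce the unconditional outage probability to a convex combination of two case-specific outage probabilities, and then, for each case, to carry out an expected-value computation over the relevant distance distribution.

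For the first step, I would condition on the three distribution cases from Section \ref{sec:exact:DistCases}. An outage is only defined when there is at least one visible satellite, i.e., conditional on the event $\BPPvis\neq\emptyset$, which has probability $1-\Pinv=\Pserml+\Psersl$. By the law of total probability,
\begin{align*}
\Pout
    &= \P[\text{outage}\mid \BPPml\neq\emptyset]\,\P[\BPPml\neq\emptyset\mid \BPPvis\neq\emptyset] \\
    &\quad + \P[\text{outage}\mid \BPPml=\emptyset,\BPPsl\neq\emptyset]\,\P[\BPPml=\emptyset,\BPPsl\neq\emptyset\mid \BPPvis\neq\emptyset],
\end{align*}
which by Lemma \ref{lem:Prob} gives exactly the convex combination in \eqref{eq:P_out_fin}, with $\Poutml$ and $\Poutsl$ the two conditional outage probabilities.

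For the second step, I unwind the outage event on each case. The rate $\log_2(1+\gamma_s)<R$ is equivalent to $h_s < w_i d_s^\alpha$, where solving $\gamma_s<2^R-1$ using the SNR expression and the path-loss law yields $w_1$ for $G_{\mathrm{t},s}=\Gtml$ (Case 1) and $w_2$ for $G_{\mathrm{t},s}=\Gtsl$ (Case 2), which are exactly the constants given in the theorem statement. Conditioned on Case 1, the distance to the serving satellite has PDF $f_Y$ from \eqref{eq:PDF_Y}, so
\begin{align*}
\Poutml = \int_a^{\dth} F_{h_s}(w_1 x^{\alpha})\, f_Y(x)\, dx,
\end{align*}
and analogously $\Poutsl=\int_{\dth}^{\dmax} F_{h_s}(w_2 x^{\alpha})\, f_Z(x)\, dx$ using \eqref{eq:PDF_Z}. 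Inserting \eqref{eq:CDF_ch_gain} and the explicit $f_D$ gives an integrand of the form $x(1-\kappa(x))^{S-1}\gamma(1+n,w_i x^\alpha/(2b))$ weighted by the series in $n$.

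For the third step, I would use the identity $4\re(\re+a)+a^2=(2\re+a)^2$ to rewrite $(1-\kappa(x))^{S-1}=((2\re+a)^2-x^2)^{S-1}/(4\re(\re+a))^{S-1}$, expand by the binomial theorem in powers of $x^{2k}$, and pull the constant prefactor outside. This reduces everything to the family of integrals
\begin{align*}
I_{n,k}=\int_a^{\dth} x^{2k+1}\,\gamma\!\left(1+n,\tfrac{w_1 x^{\alpha}}{2b}\right) dx.
\end{align*}
I would evaluate $I_{n,k}$ by integration by parts with $u=\gamma(1+n,w_1x^\alpha/(2b))$ and $dv=x^{2k+1}\,dx$, using $\partial_z\gamma(s,z)=z^{s-1}e^{-z}$; the boundary term produces the first two $\gamma(1+n,\cdot)$ pieces in \eqref{eq:Poutml_fin}, and the remaining integral, after the substitution $t=w_1 x^\alpha/(2b)$, collapses to $(2b/w_1)^{2(k+1)/\alpha}$ times the difference of $\gamma(n+1+2(k+1)/\alpha,\cdot)$ at the two endpoints, exactly matching the third and fourth terms. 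The same computation with $w_1\to w_2$ and limits $\dth,\dmax$ yields \eqref{eq:Poutsl_fin}. The one bookkeeping point I would double-check is the prefactor reduction $S/(2\re(\re+a)(4\re(\re+a))^{S-1})=2S/(4\re(\re+a))^S$, which produces the denominator displayed in the theorem.

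The main obstacle is the third step: handling the incomplete gamma with argument $w_i x^\alpha/(2b)$ inside a polynomial-times-$x$ integral. Once one recognizes that integration by parts peels off the lower-order $\gamma(1+n,\cdot)$ boundary contribution and the substitution $t=w_i x^\alpha/(2b)$ turns the residual exponential-type integral back into a higher-order incomplete gamma with shifted first argument $n+1+2(k+1)/\alpha$, the remaining work is bookkeeping: tracking the binomial sum, the series in $n$, and the constant prefactors involving $K$, $S$, and $F_D(\dth)$ or $F_D(\dmax)-F_D(\dth)$.
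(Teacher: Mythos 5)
Your proposal is correct and follows the paper's proof essentially step for step: the same conditioning on the three distribution cases with weights $P_{\mathrm{ser}}^{i}/(1-\Pinv)$, the same reduction of the outage event to $h_{s_0}<w_i d_{s_0}^{\alpha}$, the same integrals against $f_Y$ and $f_Z$, the same binomial expansion of $(1-\kappa(x))^{S-1}$ using $4\re(\re+a)+a^2=(2\re+a)^2$, and the same prefactor bookkeeping $S/(2\re(\re+a)(4\re(\re+a))^{S-1})=2S/(4\re(\re+a))^{S}$. The only cosmetic difference is in evaluating $I_{n,k}$: you integrate by parts and substitute $t=w_i x^{\alpha}/(2b)$, whereas the paper writes $\gamma(1+n,\cdot)$ as an inner integral and swaps the order of integration over two subdomains; the two computations yield identical boundary and residual terms, so the final expressions \eqref{eq:Poutml_fin} and \eqref{eq:Poutsl_fin} come out the same.
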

\setcounter{equation}{24}

\begin{proof}[Proof:\nopunct]
When the terminal is associated with the serving satellite $\s0$, the outage probability of the system is given by
\begin{align}\label{eq:Pout1}
\Pout 
    &= \P[\s0 \in \BPPml|\BPPvis \neq \emptyset] \Poutml + \P[\s0 \in \BPPsl|\BPPvis \neq \emptyset] \Poutsl
\end{align}
where $\P[\s0 \in \BPPvis^{i}|\BPPvis \neq \emptyset]$, $i\in\{\mathrm{ml},\mathrm{sl}\}$ is given by
\begin{align}
\frac{\P[\s0 \in \BPPvis^{i}, \BPPvis \neq \emptyset]}{\P[\BPPvis \neq \emptyset]} = \frac{P_{\mathrm{ser}}^{i}}{1-\Pinv}.
\end{align}
The outage probability of the system given $\s0 \in \BPPml$, $\Poutml$, in \eqref{eq:Pout1} is derived as
\begin{align}\label{eq:Pout2}
\Poutml
    &= \P[\log(1+\gamma_{\s0}) < R|\s0 \in \BPPml]\nonumber\\
    &=  \P\left[h_{\s0}  < \frac{16 \pi^2 f_{\mathrm{c}}^2 N_0 W(2^{R}-1)}{P g c^2 \Gtml \Gr d_{\s0}^{-\alpha}}\right]\nonumber\\
    & = \int_{a}^{\dth} F_{h_{\s0}}(w_1 x^{\alpha}) f_Y(x)dx.
\end{align}
Substituting \eqref{eq:CDF_ch_gain} and \eqref{eq:PDF_Y} into \eqref{eq:Pout2}, we have
\begin{align}\label{eq:Poutml1}
\Poutml
    &= \frac{K S}{2\re(\re+a)F_D(\dth)}\sum\limits_{n = 0}^\infty  {\frac{{{{(m)}_n}{\delta ^n}{{(2b)}^{1 + n}}}}{{{{(n!)}^2}}}} \nonumber\\
    &\qt  \int_{a}^{\dth} \gamma\left(1+n,\frac{w_1 x^{\alpha}}{2b}\right) \left(1-\frac{x^2-a^2}{4\re(\re+a)}\right)^{S-1} x dx \nonumber\\
    &\mathop=^{(a)} \frac{2 K S}{(4\re(\re+a))^S F_D(\dth)}\sum\limits_{n = 0}^\infty  {\frac{{{{(m)}_n}{\delta ^n}{{(2b)}^{1 + n}}}}{{{{(n!)}^2}}}} \nonumber\\
    &\qt  \sum_{k=0}^{S-1}{\binom{S-1}{k}}(a+2\re )^{2(S-1-k)}(-1)^k \nonumber\\
    &\qt \int_{a}^{\dth} \int_0^{\frac{w_1 x^{\alpha}}{2b}} t^n e^{-t} x^{2k+1} dt  dx
\end{align}
where ($a$) follows from the binomial expansion and the definition of the lower incomplete Gamma function.

\begin{figure}
\centering
\includegraphics[width=.8\columnwidth]{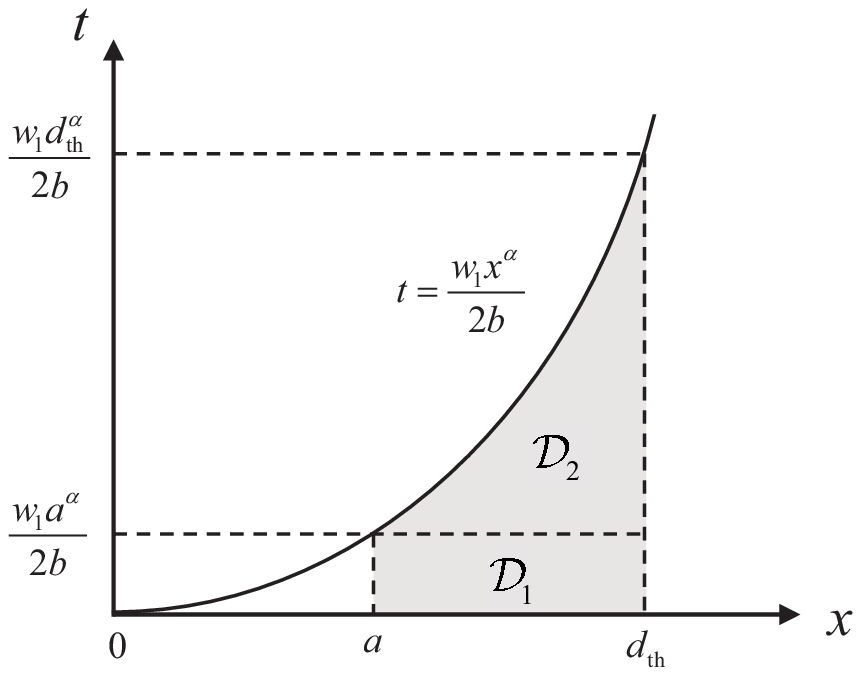}
\caption{Domain of the integration in \eqref{eq:Poutml1}.}
\label{Fig:Region1}
\end{figure}

The domain of the integration in (\ref{eq:Poutml1}) is shown as the shaded areas in Fig. \ref{Fig:Region1} and can be divided into two domains $\mathcal{D}_1$ and $\mathcal{D}_2$, which are respectively given by
$
    \mathcal{D}_1 = \left\{0 \le t \le \frac{ w_1 a^\alpha}{2 b}, a \le x \le \dth \right\}\nonumber
$
and
$
    \mathcal{D}_2 =  \left\{\frac{w_1 a^\alpha}{2 b} \le t \le \frac{w_1 \dth^\alpha}{2 b}, \left(\frac{2 b t}{w_1}\right)^{\frac{1}{\alpha}} \le x \le \dth \right\}.\nonumber
$
In order to calculate the integral in (\ref{eq:Poutml1}), we convert the double integral into the sum of two integrals over the two domains $\mathcal{D}_1$ and $\mathcal{D}_2$, respectively.
The integral over the domain $\mathcal{D}_1$ is a product of two integrals over $x$ and $t$, which is given by
\begin{align}\label{eq:integral_s1}
\mathcal{I}_{\mathcal{D}_1}
    &= \int_{a}^{\dth} x^{2k+1} dx  \times  \int_0^{\frac{w_1 a^{\alpha}}{2b}}   t^n e^{-t} dt\nonumber\\
    &= \frac{\dth^{2(k+1)}-a^{2(k+1)}}{2(k+1)} \gamma\left(1+n, \frac{w_1 a^\alpha}{2 b}\right).
\end{align}
By changing the order of variables, the integral over the domain $\mathcal{D}_2$ is given by
\begin{align}\label{eq:integral_s2}
\mathcal{I}_{\mathcal{D}_2}
    &= \int_{\frac{w_1 a^\alpha}{2 b}}^{\frac{w_1 \dth^\alpha}{2 b}} \int_{\left(\frac{2 b t}{w_1}\right)^{\frac{1}{\alpha}}}^{\dth} t^n e^{-t}  x^{2k+1} dx dt \nonumber\\
    &= \frac{1}{2(k+1)} \int_{\frac{w_1 a^\alpha}{2 b}}^{\frac{w_1 \dth^\alpha}{2 b }} t^n \exp(-t) \left\{ \dth^{2(k+1)}-\left(\frac{2 b t}{w_1}\right)^{\frac{2(k+1)}{\alpha}}\right\} dt \nonumber\\
    &= \frac{\dth^{2(k+1)}}{2(k+1)} \left\{\gamma\left(1+n,\frac{w_1 \dth^\alpha}{2 b}\right) - \gamma\left(1+n,\frac{w_1 a^\alpha}{2 b }\right)\right\} \nonumber\\
    & \qs \frac{(2 b/w_1)^{\frac{2(k+1)}{\alpha}}}{2(k+1)} \left\{\gamma\!\left(1+n+\!\frac{2(k\!+\!1)}{\alpha},\frac{w_1 \dth^\alpha}{2 b} \right) \right. \nonumber\\
    &\qs \left. \gamma\!\left(1+n+\!\frac{2(k\!+\!1)}{\alpha},\frac{w_1 a^\alpha}{2 b}\right)\right\}.
\end{align}
From (\ref{eq:Poutml1})-(\ref{eq:integral_s2}), the final expression of $\Poutml$ is given in \eqref{eq:Poutml_fin}. With the similar steps to derive $\Poutml$, the final expression of $\Poutsl$ can be easily obtained as \eqref{eq:Poutsl_fin}.
From \eqref{eq:Poutml_fin} and \eqref{eq:Poutsl_fin} with the results in Lemma~\ref{lem:Prob}, the final expression $\Pout$ can be obtained.
\end{proof}

\begin{figure}
\includegraphics[width=0.97\columnwidth]{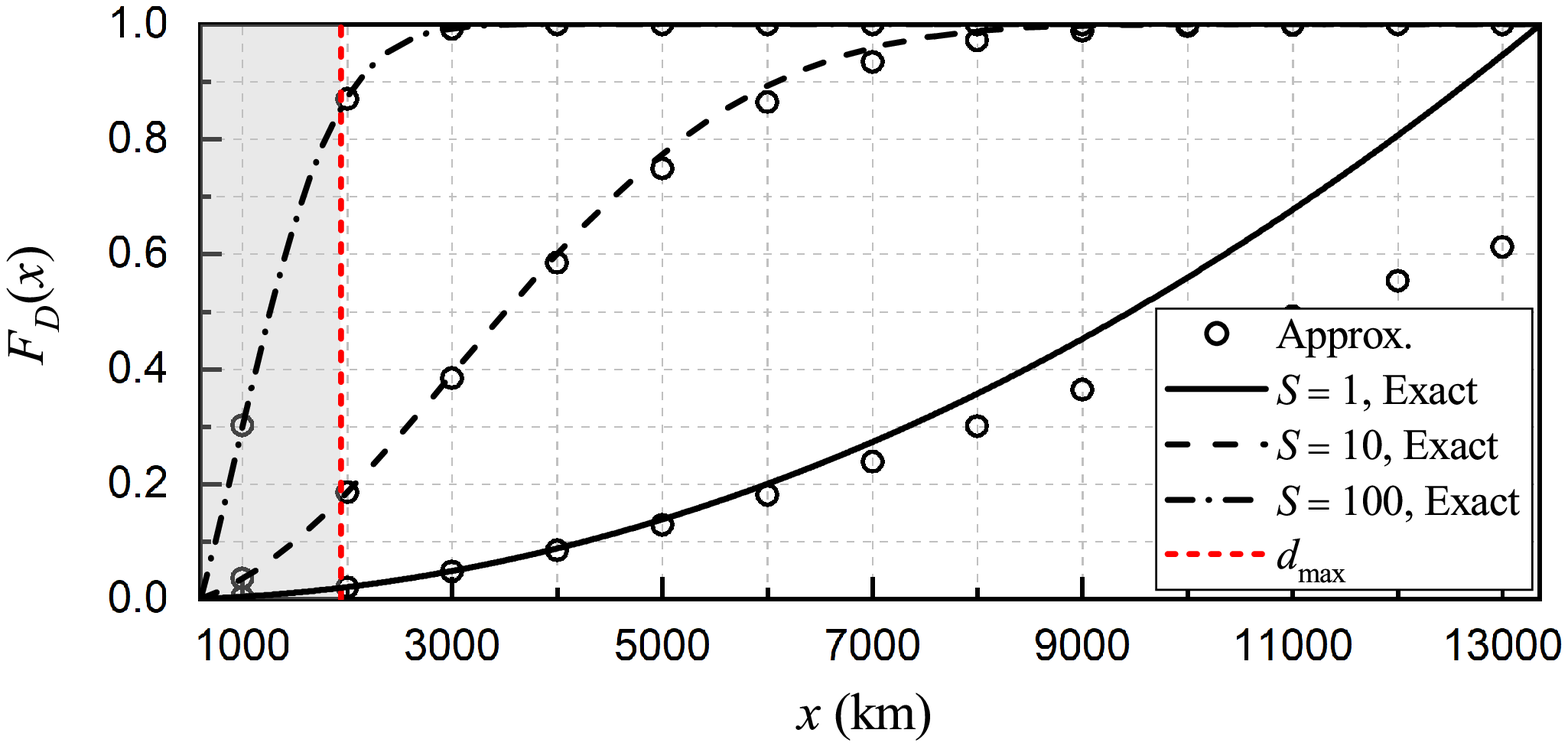}
\hspace{2cm}%
\includegraphics[width=0.97\columnwidth]{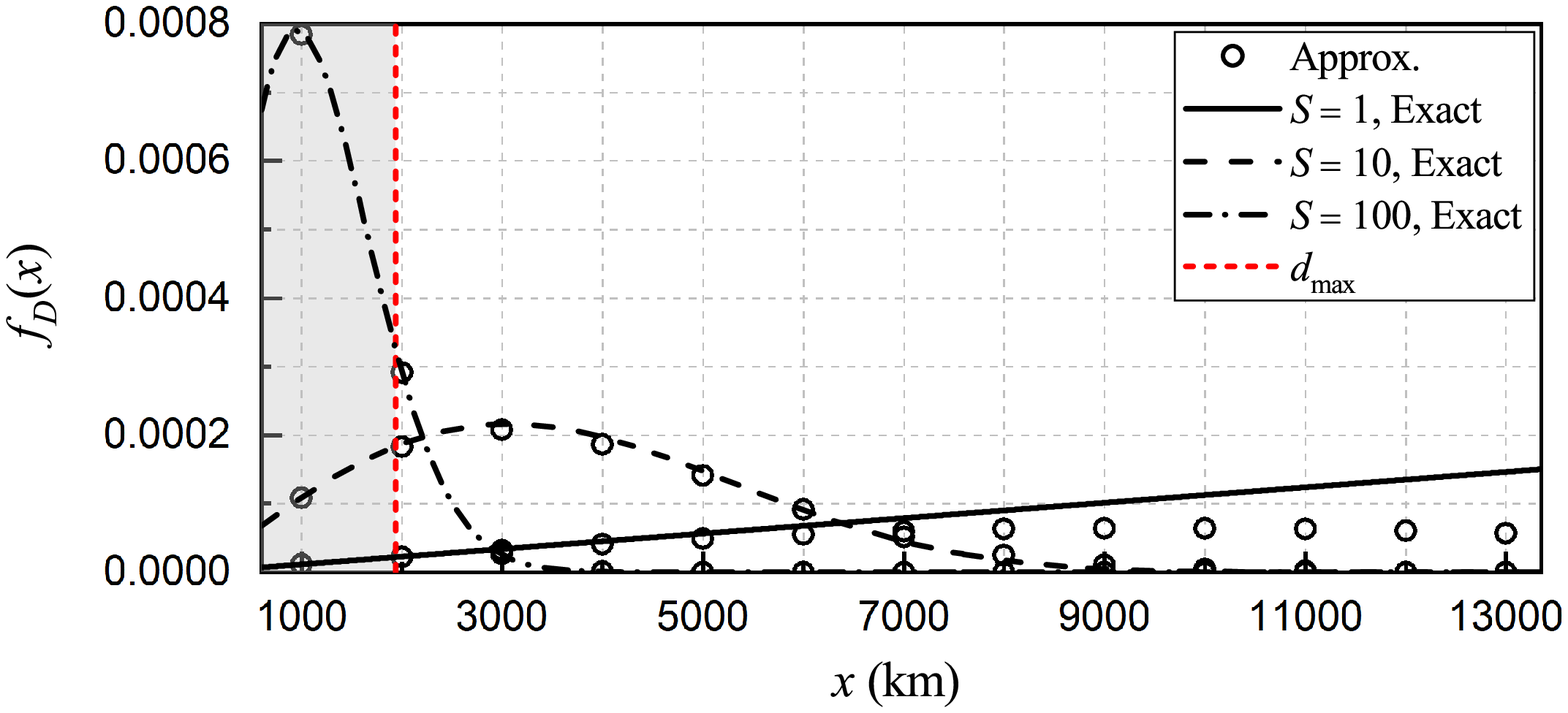}
\caption{CDF and PDF of distance to the nearest satellite, $F_D(x)$ and $f_D(x)$, for various altitudes $S=\{1,10,100\}$ km. $a=600$ km.}
\label{Fig:CDFPDFD}
\end{figure}

The expression of the outage probability is exact but complicated to obtain any insight for the system performance. In Section \ref{sec:app}, we obtain much simpler and tight approximated expression for the outage probability.

\section{Performance Approximation}\label{sec:app}
In this section, we first derive approximated expressions for the probabilities of the three distribution cases for the serving satellite and the outage probability using the Poisson limit theorem. Next, we obtain an asymptotic expression as the number of the satellites goes to infinity, which can be applicable to ultra dense LEO satellite scenarios. We also analyze the convergence of the approximated outage probability. 

We assume that the altitude of the satellites is sufficiently low, e.g.,  LEO  and  very  low  earth  orbit satellites. 
Then, based on the Poisson limit theorem, the satellites in a bounded area are asymptotically distributed according to a PPP $\PPP$ whose density is given by $\lambda_{\mathrm{s}}=S/(4\pi(\re +a)^2)$ [\ref{Ref:Book:Chiu}].
Since the void probability of the PPP in a region $\mathcal{X}$ is given by $\PvoidPPP=e^{-\lambda_{\mathrm{s}} S(\mathcal{X})}$ [\ref{Ref:Andrews}],
the CDF of distance between the terminal and the nearest satellite can be approximated as
\begin{align}\label{eq:CDF_D_approx}
\bar{F}_D^{\mathrm{}}(x)
    &=  1 - \P[D>x]\nonumber\\
    &=  1 - \prod_{s \in \PPP} \P[d_s>x]\nonumber\\
    &=  1- e^{-\lambda_{\mathrm{s}} \mathcal{S}(\mathcal{A}(x))}\nonumber\\ 
    &=
    \begin{cases} 
     0, & \mbox{if } x \le a,\\
     1- e^{-\frac{S(x^2-a^2)}{4 \re (\re +a)}}, & \mbox{if } a < x \le 2\re+a, \\
     1, & \mbox{if } x > 2\re+a,
     \end{cases}
\end{align}
and the corresponding PDF is given by 
\begin{align}\label{eq:PDF_D_approx}
\bar{f}_D^{\mathrm{}}(x)
    &=  
    \begin{cases} 
     \frac{S x}{2 \re (\re +a)}e^{-\frac{S(x^2-a^2)}{4 \re (\re +a)}}, & \mbox{if } a < x \le 2\re+a,\\
     0, & \mbox{otherwise}.
     \end{cases}
\end{align}

\begin{figure}
\includegraphics[width=\columnwidth]{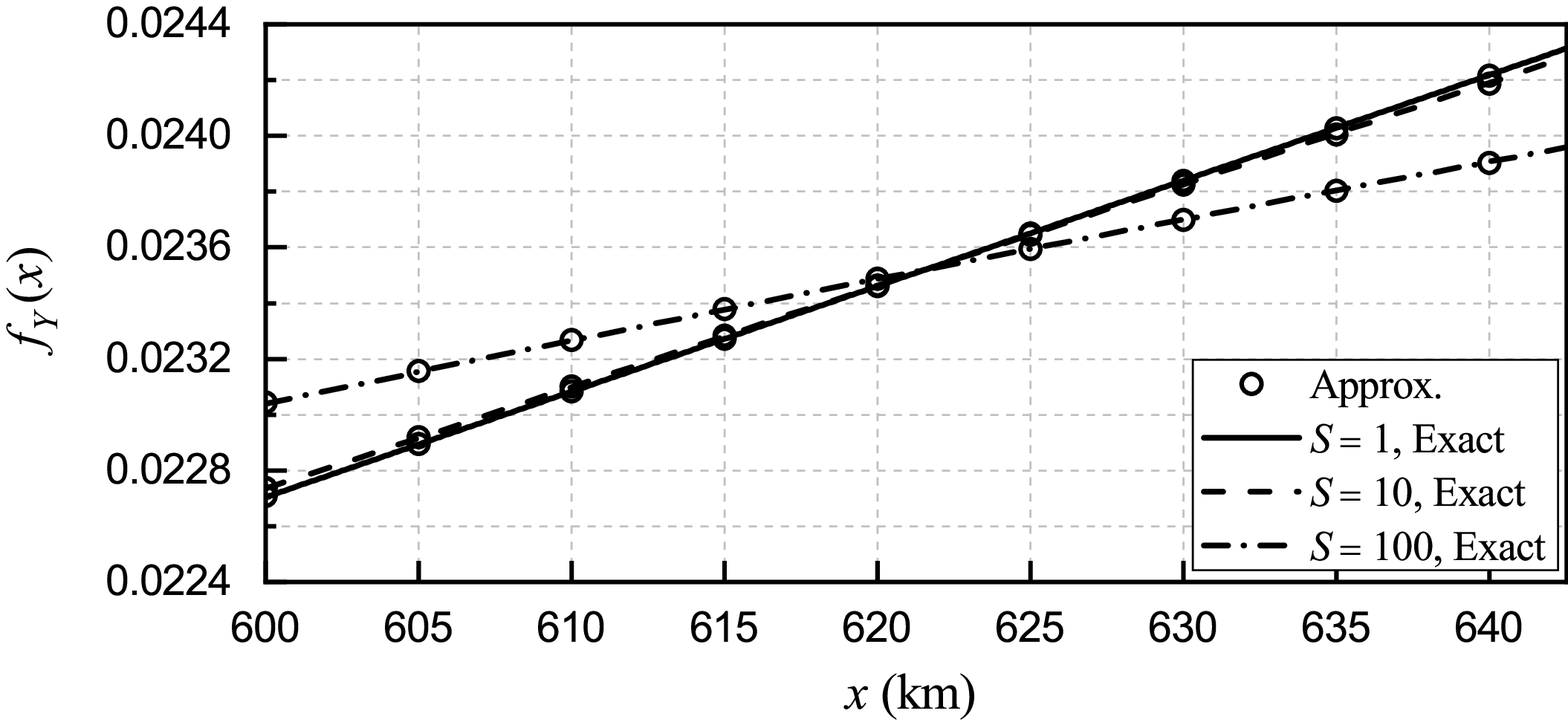}
\hspace{2cm}%
\includegraphics[width=\columnwidth]{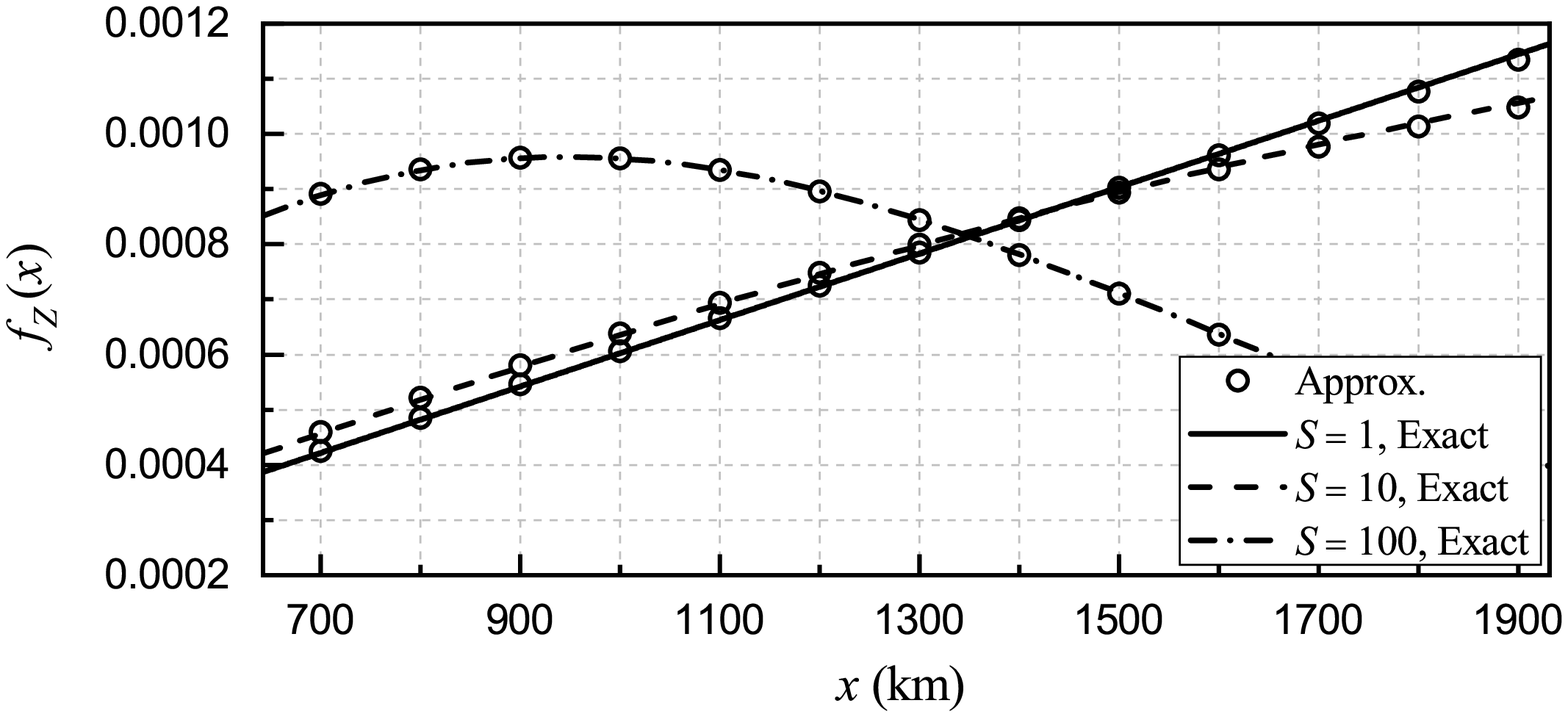}
\caption{PDFs of $Y$ and $Z$, $f_Y(x)$ and $f_Z(x)$, for various altitudes $S=\{1,10,100\}$ km. $a=600$ km.}
\label{Fig:PDFYZ}
\end{figure}

The approximated CDF and PDF of $D$ are compared with the exact ones in Fig. \ref{Fig:CDFPDFD}. 
The shaded area is the satellite-visible region including the distance between the terminal and the visible satellites, while the other area is the satellite-invisible region that is not considered for the performance analyses.
As $S$ increases, the nearest satellite is more likely to be located close, because of the satellites' dense distribution.
The approximated results become closer to the exact ones as $x$ decreases.
Especially for the satellite-visible region, both results are almost the same.
This verify that the Poisson limit theorem is well-applicable to approximate the BPP as the PPP for the LEO satellites' distribution.

The approximated CDFs and PDFs of $Y$ and $Z$, denoted by $\bar{F}_Y(x)$, $\bar{F}_Z(x)$, $\bar{f}_Y(x)$, and $\bar{f}_Z(x)$, can be obtained by substituting \eqref{eq:CDF_D_approx} and \eqref{eq:PDF_D_approx} into \eqref{eq:CDF_Y}-\eqref{eq:PDF_Z}.
As expected, the exact and approximated PDFs of $Y$ and $Z$ are also fairly close as shown in Fig. \ref{Fig:PDFYZ}. It is shown that as $S$ increases, the distance to the serving satellite is more likely to be closer. The approximated probabilities of the distribution cases for the serving satellite can be obtained in the following lemma.

\begin{lem}\label{lem:prob_approx}
The probabilities of the three distribution cases for the serving satellite are respectively approximated as
\begin{align} \label{eq:Psermlapp}
\Psermlapp
    &= 1- e^{-\frac{S}{2}(1-\cos\psth)},
\end{align}
\begin{align}\label{eq:Pserslapp}
\Pserslapp
    &= e^{-\frac{S}{2}(1-\cos\psth)} - e^{-\frac{S(\dmax^2-a^2)}{4\re(\re+a)}},
\end{align}
and
\begin{align}
\Pinvapp
    = e^{-\frac{S(\dmax^2-a^2)}{4\re(\re+a)}}.
\end{align}
\end{lem}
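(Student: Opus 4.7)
The plan is to replicate the structure of the proof of Lemma~\ref{lem:Prob}, but replace the BPP void probability $\PvoidBPP = (1-\mathcal{S}(\mathcal{X})/\SBPP)^S$ with the PPP void probability $\PvoidPPP = e^{-\lambda_{\mathrm{s}} \mathcal{S}(\mathcal{X})}$, justified by the Poisson limit theorem for low-altitude satellite constellations with density $\lambda_{\mathrm{s}}=S/(4\pi(\re+a)^2)$. Each of the three probabilities corresponds to a simple event about whether the approximating PPP has zero points in one of the regions $\Aml$, $\Avis$, or neither.

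First, I would observe that Case~1 (serving satellite in $\Aml$) happens exactly when the PPP has at least one point in $\Aml$, so
\begin{align*}
\Psermlapp = 1 - e^{-\lambda_{\mathrm{s}} \SBPPml}.
\end{align*}
Substituting $\SBPPml = 2\pi(\re+a)^2(1-\cos\psth)$ from Section~\ref{sec:distDist:surf} and the explicit value of $\lambda_{\mathrm{s}}$, the exponent collapses to $\tfrac{S}{2}(1-\cos\psth)$, which yields \eqref{eq:Psermlapp}.

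Next, for Case~3 (all satellites invisible), I would identify this as the event that the PPP has no point in $\Avis$, so
\begin{align*}
\Pinvapp = e^{-\lambda_{\mathrm{s}} \SBPPvis}.
\end{align*}
Plugging in $\SBPPvis = \pi(\re+a)(\dmax^2-a^2)/\re$ and $\lambda_{\mathrm{s}}$ simplifies the exponent to $S(\dmax^2-a^2)/(4\re(\re+a))$, giving the claimed expression. Finally, for Case~2 (serving satellite in $\Asl$), I would write this as the event ``no point in $\Aml$ but at least one point in $\Avis$,'' which by the independence of PPP counts on disjoint sets (applied to $\Aml$ and $\Asl$) equals $\P[\BPPml = \emptyset] - \P[\BPPvis = \emptyset] = e^{-\lambda_{\mathrm{s}}\SBPPml} - e^{-\lambda_{\mathrm{s}}\SBPPvis}$, producing \eqref{eq:Pserslapp}.

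There is no genuinely hard step here: once the Poisson approximation is invoked, everything reduces to substituting the surface-area formulas already computed in Section~\ref{sec:distDist:surf} and canceling factors of $\pi(\re+a)^2$ in the exponents. The only subtlety worth mentioning is to make sure the three events remain mutually exclusive and exhaustive under the PPP approximation, which follows immediately because $\Aml$ and $\Asl$ partition $\Avis$ and the PPP assigns independent Poisson counts to these disjoint regions; this guarantees $\Psermlapp + \Pserslapp + \Pinvapp = 1$, providing a sanity check on the algebra.
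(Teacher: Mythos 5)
Your proposal is correct and matches the paper's approach exactly: the paper's proof of Lemma~\ref{lem:prob_approx} simply states that it follows the proof of Lemma~\ref{lem:Prob} with the BPP void probability replaced by the PPP void probability $e^{-\lambda_{\mathrm{s}}\mathcal{S}(\mathcal{X})}$, which is precisely what you do, and your surface-area substitutions and exponent simplifications are all accurate.
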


\begin{proof}[Proof:\nopunct]
The proof is similar to that of Lemma \ref{lem:Prob} with the void probability of the PPP.
\end{proof}

Using Lemma \ref{lem:prob_approx}, the approximated outage probability is obtained in the following theorem.

\begin{thm}\label{thm:Poutapprox}
The outage probability of the system is approximated as
\begin{align}\label{eq:Poutapp_fin}
\Poutapp
    &= \frac{\Psermlapp}{1-\Pinvapp} \Poutmlapp + \frac{\Pserslapp}{1-\Pinvapp} \Poutslapp
\end{align}
where $\Poutmlapp$ and $\Poutslapp$ are the approximated outage probabilities when the serving satellite is in $\Aml$ or $\Asl$, respectively given by
\begin{align}\label{eq:Poutml_app_fin}
\Poutmlapp
    & = \frac{K}{\bar{F}_D(\dth)}\sum\limits_{n = 0}^\infty  {\frac{{{{(m)}_n}{\delta ^n}{{(2b)}^{1 + n}}}}{{{{(n!)}^2}}}} \left(e^{\frac{S a^2}{4\re(\re+a)}} \mathcal{C}_\mathrm{ml}[n]\right.\nonumber\\
    &\qa \gamma\left(1+n,\frac{w_1 a^{\alpha}}{2b}\right) - \left.e^{-\kappa(\dth)}\gamma\left(1+n, \frac{w_1 \dth^{\alpha}}{2b}\right) \right)
\end{align}
and 
\begin{align}\label{eq:Poutsl_app_fin}
\Poutslapp
    & = \!\frac{K}{\bar{F}_D(\dmax)\!-\!\bar{F}_D(\dth)}\!\sum\limits_{n = 0}^\infty  \!{\frac{{{{(m)}_n}{\delta ^n}{{(2b)}^{1 + n}}}}{{{{(n!)}^2}}}}\!\left(e^{\frac{S a^2}{4\re(\re+a)}} \mathcal{C}_\mathrm{sl}[n]\right.\nonumber\\
    & + \left. e^{-\kappa(\dth)}\gamma\left(1+n, \frac{w_2 \dth^{\alpha}}{2b}\right) - e^{-\kappa(\dmax)}\gamma\left(1+n, \frac{w_2 \dmax^\alpha}{2b}\right)\right)
\end{align}
with 
\begin{align}\label{eq:Cml}
\mathcal{C}_\mathrm{ml}[n]
    = \int_{\frac{w_1 a^{\alpha}}{2b}}^{\frac{w_1 \dth^{\alpha}}{2b}} t^n e^{-t-\frac{S}{4\re(\re+a)}\left(\frac{2 b t}{w_1}\right)^{\frac{2}{\alpha}}} dt
\end{align}
and
\begin{align}\label{eq:Csl}
\mathcal{C}_\mathrm{sl}[n]
    = \int_{\frac{w_2 \dth^{\alpha}}{2b}}^{\frac{w_2 \dmax^{\alpha}}{2b}} t^n e^{-t-\frac{S}{4\re(\re+a)}\left(\frac{2 b t}{w_2}\right)^{\frac{2}{\alpha}}} dt.
\end{align}
\end{thm}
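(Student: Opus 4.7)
The approach is to mirror the proof of Theorem \ref{thm:Pout}, replacing the BPP machinery with its PPP limit supplied by the Poisson limit theorem, as encoded in \eqref{eq:CDF_D_approx}--\eqref{eq:PDF_D_approx} and Lemma \ref{lem:prob_approx}. First I would apply the same total-probability decomposition used in \eqref{eq:Pout1} to write
\[
\Poutapp = \frac{\Psermlapp}{1-\Pinvapp}\,\Poutmlapp + \frac{\Pserslapp}{1-\Pinvapp}\,\Poutslapp,
\]
reducing the task to computing $\Poutmlapp = \int_a^{\dth} F_{h_{\s0}}(w_1 x^\alpha)\,\bar{f}_Y(x)\,dx$ and its side-lobe analogue obtained by integrating against $\bar{f}_Z$ on $[\dth,\dmax]$ with the constant $w_2$.

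For $\Poutmlapp$, I would substitute the shadowed-Rician CDF \eqref{eq:CDF_ch_gain} together with $\bar{f}_Y(x) = \bar{f}_D(x)/\bar{F}_D(\dth)$, then rewrite $\gamma(1+n,\cdot)$ via its integral definition and apply Fubini to obtain a double integral in $(t,x)$ over the same triangular region shown in Fig. \ref{Fig:Region1}. I would then split this region into the subdomains $\mathcal{D}_1$ and $\mathcal{D}_2$ used in the proof of Theorem \ref{thm:Pout}. The key simplification relative to the exact case is that the outer kernel becomes $\frac{Sx}{2\re(\re+a)}\exp(-S(x^2-a^2)/(4\re(\re+a)))$, whose antiderivative is elementary via the substitution $u = S(x^2-a^2)/(4\re(\re+a))$; this eliminates the binomial expansion and the sum over $k$ that were needed in Theorem \ref{thm:Pout}. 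On $\mathcal{D}_1$ the integral factors into $\gamma(1+n, w_1 a^\alpha/(2b))$ times $1 - e^{-S\kappa(\dth)}$. On $\mathcal{D}_2$, the inner $x$-integral from $(2bt/w_1)^{1/\alpha}$ up to $\dth$ evaluates to $e^{Sa^2/(4\re(\re+a))}\bigl[e^{-S(2bt/w_1)^{2/\alpha}/(4\re(\re+a))} - e^{-S\dth^2/(4\re(\re+a))}\bigr]$; multiplying by $t^n e^{-t}$ and integrating in $t$ yields the residual integral $e^{Sa^2/(4\re(\re+a))}\mathcal{C}_\mathrm{ml}[n]$ from the first bracketed term, plus a difference of two lower incomplete gammas from the second. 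Adding the $\mathcal{D}_1$ and $\mathcal{D}_2$ contributions and canceling the matching $\gamma(1+n, w_1 a^\alpha/(2b))$ cross-terms reproduces \eqref{eq:Poutml_app_fin}. The derivation of $\Poutslapp$ follows step-for-step using $\bar{f}_Z$, the gain factor $w_2$, and endpoints $[\dth,\dmax]$, producing \eqref{eq:Poutsl_app_fin} with the analogous residual integral $\mathcal{C}_\mathrm{sl}[n]$ from \eqref{eq:Csl}.

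The main obstacle I expect is bookkeeping rather than analysis: tracking the exponential cross-terms $e^{-S\kappa(\dth)}$ across the two subdomains so the correct cancellations yield the compact form of \eqref{eq:Poutml_app_fin}, and recognizing that the $t$-integral carrying $\exp(-S(2bt/w_1)^{2/\alpha}/(4\re(\re+a)))$ does not reduce to incomplete gammas for general $\alpha$ and must be retained as $\mathcal{C}_\mathrm{ml}[n]$. The PPP substitution itself is a direct translation of the exact-case argument.
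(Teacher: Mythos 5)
Your proposal is correct and follows essentially the same route as the paper's Appendix~\ref{app:thm_pout_approx}: the same conditional decomposition, the same substitution of the shadowed-Rician CDF and the PPP-approximated density $\bar{f}_Y$, and the same Fubini split of the resulting double integral, with the non-elementary residual $t$-integral retained as $\mathcal{C}_\mathrm{ml}[n]$ exactly as in \eqref{eq:Cml}. The only cosmetic difference is that the paper first changes variables to $y=\kappa(x)$ and splits the $(t,y)$-domain into $\mathcal{D}_3$ and $\mathcal{D}_4$, whereas you integrate the kernel $\tfrac{Sx}{2\re(\re+a)}e^{-S(x^2-a^2)/(4\re(\re+a))}$ directly in $x$ over the domains of the exact proof; the resulting terms and cancellations are identical.
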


\begin{proof}[Proof:\nopunct]
See Appendix \ref{app:thm_pout_approx}.
\end{proof}

\begin{rem}
The approximated outage probability in Theorem~\ref{thm:Poutapprox} is much simpler than the exact one in Theorem \ref{thm:Pout}, since the approximated expression have no summation over the number of satellites thanks to the Poisson limit theorem.
Especially, when there are an extensive number of satellites, e.g., thousands of satellites, the approximated results make it easy to evaluate the system performance. 
\end{rem}

The numerical integrals in \eqref{eq:Cml} and \eqref{eq:Csl} can be further simplified for $\alpha=2$ in the following corollary.

\begin{cor}\label{cor:Poutapprox}
When $\alpha=2$, $\mathcal{C}_\mathrm{ml}[n]$ and $\mathcal{C}_\mathrm{ml}[n]$ become
\begin{align}\label{eq:Cmla2}
\mathcal{C}_\mathrm{ml}&[n,\alpha=2]\nonumber\\
    &= \frac{1}{w_3^{1+n}}\left(\gamma\left(1+n,\frac{w_1 w_3 \dth^{2}}{2b}\right)-\gamma\left(1+n,\frac{w_1 w_3 a^{2}}{2b}\right)\right)
\end{align}
and
\begin{align}\label{eq:Csla2}
&\mathcal{C}_\mathrm{sl}[n,\alpha=2]\nonumber\\
    &= \frac{1}{w_4^{1+n}}\left(\gamma\left(1+n,\frac{w_2 w_4 \dmax^{2}}{2b}\right)-\gamma\left(1+n,\frac{w_2 w_4 \dth^{2}}{2b}\right)\right),
\end{align}
respectively, where $w_3=1+2 S b/(4w_1 \re (\re+a))$ and $w_4=1+2 S b/(4w_2 \re (\re+a))$.
\end{cor}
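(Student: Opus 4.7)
The plan is to substitute $\alpha = 2$ directly into the integrand and observe that the exponent simplifies to a linear function of $t$, after which a single change of variables yields a difference of two lower incomplete Gamma functions.

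First I would examine the exponent in \eqref{eq:Cml}: with $\alpha=2$, the term $(2bt/w_1)^{2/\alpha}$ collapses to the linear factor $2bt/w_1$, so the integrand becomes
\begin{align*}
t^n \exp\!\left(-t - \frac{S}{4\re(\re+a)}\cdot\frac{2bt}{w_1}\right)
= t^n \exp(-w_3 t),
\end{align*}
where $w_3 = 1 + 2Sb/\bigl(4w_1\re(\re+a)\bigr)$ is precisely the constant defined in the corollary. This is the only nontrivial observation in the proof; once this consolidation is done, no further structural simplification is needed.

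Next I would apply the substitution $u = w_3 t$, so that $du = w_3\, dt$ and $t^n = u^n / w_3^n$. The lower and upper limits $w_1 a^{2}/(2b)$ and $w_1 \dth^{2}/(2b)$ are rescaled to $w_1 w_3 a^{2}/(2b)$ and $w_1 w_3 \dth^{2}/(2b)$, respectively, and the prefactor $1/w_3^{n+1}$ emerges from combining $1/w_3^n$ with the Jacobian $1/w_3$. The resulting integral is exactly $\int u^n e^{-u}\,du$ on a bounded interval, which by the definition $\gamma(1+n,x) = \int_0^x u^n e^{-u}\,du$ stated in the notation section equals
\begin{align*}
\gamma\!\left(1+n,\frac{w_1 w_3 \dth^{2}}{2b}\right) - \gamma\!\left(1+n,\frac{w_1 w_3 a^{2}}{2b}\right),
\end{align*}
establishing \eqref{eq:Cmla2}.

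For $\mathcal{C}_\mathrm{sl}[n,\alpha=2]$ in \eqref{eq:Csla2}, I would repeat the identical procedure verbatim with $w_1$ replaced by $w_2$, $w_3$ replaced by $w_4 = 1 + 2Sb/\bigl(4w_2\re(\re+a)\bigr)$, and the integration interval shifted to $[\dth,\dmax]$ rather than $[a,\dth]$. There is no real obstacle in this proof; the only step that requires attention is verifying that the limits transform correctly under the substitution and that $w_3$ (respectively $w_4$) is indeed the right consolidation constant. Because the calculation is essentially mechanical, I would present it concisely in a few lines rather than as a lengthy derivation.
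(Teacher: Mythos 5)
Your proposal is correct and follows exactly the paper's own argument: set $\alpha=2$ so the exponent becomes $-w_3 t$, then apply the change of variable $l = w_3 t$ to obtain the $1/w_3^{1+n}$ prefactor and the rescaled limits, and conclude from the definition of the lower incomplete Gamma function. The treatment of $\mathcal{C}_\mathrm{sl}$ by the same substitution with $w_2$, $w_4$ likewise matches the paper.
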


\begin{proof}[Proof:\nopunct]
By letting $\alpha=2$ and using the change of variable $l=w_3 t$, we have
\begin{align}
\mathcal{C}_\mathrm{ml}[n,\alpha=2]
    = \int_{\frac{w_1 a^{2}}{2b}}^{\frac{w_1 \dth^{2}}{2b}} t^n
    e^{-w_3t} dt
    = \frac{1}{w_3^{1+n}}\int_{\frac{w_1 w_3 a^{2}}{2b}}^{\frac{w_1 w_3 \dth^{2}}{2b}} l^n e^{-l} dl,
\end{align}
which becomes \eqref{eq:Cmla2} from the definition of the lower incomplete Gamma function.
Similarly, $\mathcal{C}_\mathrm{sl}[n, \alpha=2]$ in \eqref{eq:Csla2} can be readily obtained.
\end{proof}

We also conducted the asymptotic analysis for $S\to\infty$ with arbitrary $\alpha$ in the following corollary.

\begin{cor}\label{cor:PoutapproxAsym}
When $S\to\infty$, $\Poutapp$ becomes
\begin{align}\label{eq:PoutapproxAsym}
\Poutapp 
    \to K\sum\limits_{n = 0}^\infty  {\frac{{{{(m)}_n}{\delta ^n}{{(2b)}^{1 + n}}}}{{{{(n!)}^2}}}}\gamma\left(1+n,\frac{w_1 a^{\alpha}}{2b}\right).
\end{align}
\end{cor}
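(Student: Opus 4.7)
The approach is to take the $S\to\infty$ limit term by term in the decomposition
$\Poutapp = \frac{\Psermlapp}{1-\Pinvapp}\Poutmlapp + \frac{\Pserslapp}{1-\Pinvapp}\Poutslapp$
from Theorem \ref{thm:Poutapprox}. Since $1-\cos\psth>0$ and $\dmax^2-a^2>0$, Lemma \ref{lem:prob_approx} immediately yields $\Psermlapp\to 1$, $\Pserslapp\to 0$, and $\Pinvapp\to 0$, so the first weighting factor tends to $1$ and the second to $0$. The problem therefore reduces to computing $\lim_{S\to\infty}\Poutmlapp$.

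I would then dissect each piece of \eqref{eq:Poutml_app_fin}. The prefactor $\bar{F}_D(\dth)\to 1$, the middle summand $\gamma(1+n,w_1 a^\alpha/(2b))$ is $S$-independent and already matches the target expression, and the subtracted term (whose exponential weight carries the density $S$ in its exponent via the PPP void probability) vanishes. The only substantive step is to show that $e^{Sa^2/(4\re(\re+a))}\,\Cml[n]\to 0$. The plan is to absorb the prefactor into the integrand by pulling the maximum of the $S$-dependent exponent out at the lower endpoint:
\begin{align*}
e^{\frac{Sa^2}{4\re(\re+a)}}\Cml[n]
= \int_{w_1 a^{\alpha}/(2b)}^{w_1 \dth^{\alpha}/(2b)} t^{n} e^{-t}\,
\exp\!\Bigl(-\tfrac{S}{4\re(\re+a)}\bigl[(\tfrac{2bt}{w_1})^{2/\alpha}-a^{2}\bigr]\Bigr)\,dt.
\end{align*}
The bracket $(2bt/w_1)^{2/\alpha}-a^{2}$ vanishes only at the lower endpoint and is strictly positive on the remainder of the interval, so the $S$-dependent exponential tends to zero pointwise while being bounded by $1$. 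Dominated convergence then forces the integral to $0$ (a Watson-type argument gives the sharper rate $\Theta(1/S)$ if desired).

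The main obstacle is to justify commuting the $S\to\infty$ limit with the infinite Shadowed-Rician series $\sum_{n=0}^{\infty}$. Using the same pointwise bound as above, $e^{Sa^2/(4\re(\re+a))}\Cml[n]\le\gamma(1+n,w_1\dth^{\alpha}/(2b))$ uniformly in $S$; likewise the other two pieces are bounded by $\gamma(1+n,w_1\dth^\alpha/(2b))$. The majorizing series $\sum_{n=0}^{\infty}\frac{(m)_n\delta^{n}(2b)^{1+n}}{(n!)^{2}}\gamma(1+n,w_1\dth^{\alpha}/(2b))$ is convergent (it is precisely $F_{h_s}(w_1\dth^{\alpha})/K\le 1/K$ from \eqref{eq:CDF_ch_gain}), so a second application of dominated convergence on the summation legitimizes the interchange. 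Collecting the three contributions then yields \eqref{eq:PoutapproxAsym}.
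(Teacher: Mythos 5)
Your proof is correct, but it takes a genuinely different route from the paper's. Both arguments start the same way: Lemma \ref{lem:prob_approx} gives $\Psermlapp\to 1$ and $\Pserslapp,\Pinvapp\to 0$, reducing everything to $\lim_{S\to\infty}\Poutmlapp$. From there the paper works at the level of the distance distribution: it observes that $\bar{F}_D(x)\to u(x-a)$ and $\bar{f}_D(x)\to\delta(x-a)$, so $\bar{f}_Y$ concentrates at $x=a$ and $\Poutmlapp=\int_a^{\dth}F_{h_{\s0}}(w_1x^{\alpha})\bar{f}_Y(x)\,dx\to F_{h_{\s0}}(w_1a^{\alpha})$, which is \eqref{eq:PoutapproxAsym} by \eqref{eq:CDF_ch_gain}. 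You instead take the limit \emph{after} integration, term by term in the closed form \eqref{eq:Poutml_app_fin}: $\bar{F}_D(\dth)\to 1$, the $e^{-\kappa(\dth)}$ term vanishes because its exponent scales with $S$, and $e^{Sa^{2}/(4\re(\re+a))}\Cml[n]\to 0$ by absorbing the prefactor into the integrand and invoking dominated convergence, since $(2bt/w_1)^{2/\alpha}-a^{2}>0$ except at the lower endpoint; only the $S$-independent term $\gamma(1+n,w_1a^{\alpha}/(2b))$ survives. Your version is more computational but also more rigorous: it pins down exactly why each $S$-dependent piece dies and, crucially, it justifies interchanging the $S\to\infty$ limit with the infinite shadowed-Rician series via the uniform majorant $\gamma(1+n,w_1\dth^{\alpha}/(2b))$ — a step the paper's Dirac-delta argument passes over silently (and the delta sits at the boundary $x=a$ of the integration range, which the paper does not address). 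The paper's version buys the cleaner physical insight that the serving distance becomes deterministic and equal to $a$; yours buys a complete justification and, as you note, even a $\Theta(1/S)$ convergence rate. Both lead to the same conclusion.
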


\begin{proof}[Proof:\nopunct]
Assuming that $S\to\infty$, $\Psermlapp\to1$ and $\Pserslapp$, $\Pinvapp\to~0$ from the results in Lemma \ref{lem:prob_approx}.  
In addition, the approximated CDF and PDF of $D$ in \eqref{eq:CDF_D_approx} and \eqref{eq:PDF_D_approx} asymptotically become  $\bar{F}_D(x)\to u(x-a)$ and $\bar{f}_D(x)\to \delta(x-a)$, respectively, where $u(x)$ is the unit step function, and $\delta(x)$ is the Dirac delta function. From \eqref{eq:CDF_Y} and \eqref{eq:PDF_Y}, the approximated CDF and PDF of $Y$ also become $u(x-a)$ and $\delta(x-a)$, respectively. Using the asymptotic $\Psermlapp$, $\Pserslapp$, $\Pinvapp$, and  $\bar{f}_Y(x)$, 
the outage probability is asymptotically obtained as $\Poutapp=\Poutmlapp=F_{h_{s_0}}(w_1 a^{\alpha})$, which becomes \eqref{eq:PoutapproxAsym} by \eqref{eq:CDF_ch_gain}.
\end{proof}

\begin{rem}
When $S\to\infty$, the nearest satellite is surely located in $\Aml$ due to the large number of satellites, i.e., $\Psermlapp\to1$  and $\Pserslapp,\Pinvapp \to 0$. 
\end{rem}
\begin{rem}
When $S\to\infty$, $\Poutapp$ becomes the true outage probability when the distance to the serving satellite is deterministic and has value of $a$, i.e., possible minimum distance to the satellite.
\end{rem}

Since the expression of the outage probability in Theorem~\ref{thm:Poutapprox} has the infinite number of summations, we now analyze its convergence.
Let $\Poutapp[N]$ be the approximated outage probability where the infinite summations in \eqref{eq:Poutml_app_fin} and \eqref{eq:Poutsl_app_fin} are limited to the summations over $n=0,\cdots,N$.
Then, the difference between $\Poutapp[N]$ and $\Poutapp[N-1]$ is given by
\begin{align}\label{eq:PoutDelta}
\Delta_{\mathrm{out}}[N]
    &\delequal\Poutapp[N] - \Poutapp[N-1]\nonumber\\
    &= \frac{\Pserml}{1-\Pinv} \Delta_{\mathrm{out}}^{\mathrm{ml}}[N] + \frac{\Psersl}{1-\Pinv} \Delta_{\mathrm{out}}^{\mathrm{sl}}[N]
\end{align}
where
\begin{align}\label{eq:deltaml}
\Delta_{\mathrm{out}}^{\mathrm{ml}}&[N]
     = \frac{2b K{(m)}_N (2b\delta)^N}{\bar{F}_D(\dth)(N!)^2} \left(e^{\frac{S a^2}{4\re(\re+a)}} \mathcal{C}_\mathrm{ml}[N]\right.\nonumber\\
    &\qa \gamma\left(1+N,\frac{w_1 a^{\alpha}}{2b}\right) - \left.e^{-\kappa(\dth)}\gamma\left(1+N, \frac{w_1 \dth^{\alpha}}{2b}\right) \right)
\end{align}
and
\begin{align}\label{eq:deltasl}
&\Delta_{\mathrm{out}}^{\mathrm{sl}}[N]
     = \frac{2b K {(m)}_N (2b\delta)^N}{(\bar{F}_D(\dmax)-\bar{F}_D(\dth)){(N!)}^2}\left(e^{\frac{S a^2}{4\re(\re+a)}} \mathcal{C}_\mathrm{sl}[N]\right.\nonumber\\
    & + \left. e^{-\kappa(\dth)}\gamma\left(1+N, \frac{w_2 \dth^{\alpha}}{2b}\right) - e^{-\kappa(\dmax)}\gamma\left(1+N, \frac{w_2 \dmax^\alpha}{2b}\right)\right).
\end{align}
In \eqref{eq:deltaml}, 
\begin{align}\label{eq:convMulti}
\lim_{N\to\infty} \frac{(m)_N}{(N!)^2} = \lim_{N\to\infty}\frac{\Gamma(m+N)}{\Gamma(m)\Gamma^2(N+1)} = 0,    
\end{align}
so $\Delta_{\mathrm{out}}^{\mathrm{ml}}[N]$ converges when the three terms $\Cml[N]/\zeta^N$, $\gamma\left(1+N,\frac{w_1 a^{\alpha}}{2b}\right)/\zeta^N$, and $\gamma\left(1+N, \frac{w_1 \dth^{\alpha}}{2b}\right)/\zeta^N$ converge where $\zeta=1/2b\delta$.
Let $\mcV_N^{(i)}=\Cml[N]/\zeta^{N+i}$. Then, as $N\to\infty$, $\mcV_N^{(i)}$ becomes
\begin{align}\label{eq:limitV}
\lim_{N\to \infty}\mcV_{N}^{(i)} 
    &= \lim_{N\to \infty} \frac{\Cml[N]}{\zeta^{N+i}} \nonumber\\
    &= \beta \lim_{N\to \infty}  \frac{\Cml[N]-\Cml[N-1]}{\zeta^{N+i}-\zeta^{N+i-1}} + \lim_{N\to \infty} \frac{\Cml[N-1]}{\zeta^{N+i}}\nonumber\\
    &\mathop=^{(a)} \beta \lim_{N\to \infty} \mcV_{N-1}^{(i)} + \lim_{N\to \infty} \mcV_{N-1}^{(i+1)}
\end{align}
where $\beta=1-1/\zeta=1-2b\delta$, and ($a$) follows from the Stolz-Ces\`{a}ro theorem, i.e., $\lim_{n\to\infty}\frac{a_{n+1}-a_{n}}{b_{n+1}-b_{n}}\to\lim_{n\to\infty}\frac{a_{n}}{b_{n}}$ for strictly increasing sequences $a_n$ and $b_n$ [\ref{Ref:Book:Choudary}].
Using \eqref{eq:limitV}, the limitation of the first term $\Cml[N]/\zeta^N$, i.e., $\lim_{N\to \infty} \mcV_N^{(0)}$, is obtained as
\begin{align}\label{eq:conv1st}
\lim_{N\to \infty} \mcV_N^{(0)} 
    &= \beta \lim_{N\to \infty} \mcV_{N-1}^{(0)} + \lim_{N\to \infty}\mcV_{N-1}^{(1)} \nonumber\\
    &= \beta^2 \lim_{N\to \infty}\mcV_{N-2}^{(0)} + 2\beta \lim_{N\to \infty}\mcV_{N-2}^{(1)} + \lim_{N\to \infty}\mcV_{N-2}^{(2)} \nonumber\\
    &\,\,\vdots \nonumber\\
    &= \lim_{N\to \infty} \sum_{n=0}^{N} \binom{N}{n} \beta^{N-n} \mcV_{0}^{(n)}\nonumber\\
    &= \Cml[0] \lim_{N\to \infty}(\beta+2b\delta)^N \nonumber\\
    &\mathop=^{{(a)}} \Cml[0]
\end{align}
where ($a$) follows from the fact that $\lim_{N\to \infty}1^N=1$. Since $\Cml[0]$ is a constant on $N$, the first term converges as $N\to\infty$.
Similarly, the second and third terms respectively converge as 
\begin{align}\label{eq:conv2nd}
\lim_{N\to\infty} \frac{1}{\zeta^N}\gamma\left(1+N,\frac{w_1 a^{\alpha}}{2b}\right) = 1-e^{\frac{w_1 a^{\alpha}}{2b}}
\end{align}
and
\begin{align}\label{eq:conv3rd}
\lim_{N\to\infty} \frac{1}{\zeta^N} \gamma\left(1+N, \frac{w_1 \dth^{\alpha}}{2b}\right) = 1-e^{\frac{w_1 \dth^{\alpha}}{2b}}.
\end{align}
From \eqref{eq:deltaml}, \eqref{eq:convMulti} and \eqref{eq:conv1st}-\eqref{eq:conv3rd}, $\lim_{N\to\infty} \Delta_{\mathrm{out}}^{\mathrm{ml}}[N] = 0$. With the similar steps, we can show that $\lim_{N\to\infty} \Delta_{\mathrm{out}}^{\mathrm{sl}}[N] = 0$. Hence, we finally have $\lim_{N\to\infty} \Delta_{\mathrm{out}}[N] = 0$, which proves that the approximated outage probability converges as $N\to\infty$.

\begin{figure*}
\setcounter{myeqncount}{\value{equation}}
\setcounter{equation}{55}
\begin{align}\label{eq:1st_der}
\frac{d\themin}{d\dmax} 
    =& - \frac{{{a^2} + 2{\re }a + d_{{\mathrm{max}}}^2}}{{{d_{{\mathrm{max}}}}\sqrt {(2\re +\dmax+a)(\dmax-a)(2\re d_{{\mathrm{max}}}+{{a^2} + 2{\re }a - d_{{\mathrm{max}}}^2})} }}<0
\end{align}
\normalsize \hrulefill \vspace*{4pt}
\setcounter{equation}{\value{myeqncount}}
\end{figure*}

\section{Throughput Maximization and Complexity Analyses}\label{sec:thruMax}
In this section, we formulate the throughput maximization problem with satellite-visibility and outage constraints.To solve the problem, we reformulate it with bounded feasible sets and propose an iterative algorithm. We also analyze the computational complexity of the proposed algorithm.

The system throughput is defined as the data rate (bps/Hz) successfully transferred from the serving satellite to the terminal without any outage, which is given by [\ref{Ref:Nasir}]
\begin{equation}\label{eq:Thru}
    T=\Pvis(1-\Pout)R.
\end{equation}
Now, we optimize the transmission rate $R$ and the minimum elevation angel $\themin$ to maximize the system throughput under the satellite-visibility and outage constraints.\footnote{The transmission rate $R$ should be carefully selected for throughput maximization. Too low $R$ reduces the transmitted data rate itself, and too high $R$ causes the high outage probability, both resulting in the reduced system throughput. It is also crucial to configure a proper minimum elevation angle $\themin$. This is because, with low $\themin$, the distance to the serving satellite can be too long, while, with high $\themin$, the satellite-invisible probability increases, which may decrease the system throughput.}
The throughput maximization problem is formulated as
\begin{subequations}\label{prob1}
\begin{align}
{\mathop {{\mathop{\mathrm {maximize}}\nolimits} }\limits_{{R, \themin}} }&\,\,\,\,\,T(R,\themin)\label{prob_obj_func}\\
{{\mathrm{subject \,\,to}}}&\,\,\,\,\,{\Pvis(\themin) \ge {\eta}},\label{prob_const_a}\\ 
{}&\,\,\,\,\,{{\Pout(R,\themin)} \le \varepsilon,}\label{prob_const_b}\\
{}&\,\,\,\,\,{{R} \ge {0},}\label{prob_const_c}\\ 
{}&\,\,\,\,\,{{\themin} \ge {0}.\label{prob_const_d}}
\end{align}
\end{subequations}
In the problem (\ref{prob1}), the constraint (\ref{prob_const_a}) means that the satellite-visible probability for a given location is greater than a visibility constraint $\eta$ to achieve high availability, and the constraint (\ref{prob_const_b}) describes that the outage probability of the system is less than an outage constraint $\varepsilon$ for reliability. Due to the non-convexity of the objective function \eqref{prob_obj_func}, it is difficult to obtain the optimal solutions for the problem. The optimal solutions may be obtained by a two-dimensional (2D) exhaustive search but the computational complexity to search all possible transmission rates and minimum elevation angles is significantly high.
To reduce the computational complexity, we transform the problem with bounded feasible regions and use an iterative algorithm as described next.

\subsection{Problem Transformation and Iterative Algorithm}
We first simplify the problem \eqref{prob1} by transforming the constraints \eqref{prob_const_a} and \eqref{prob_const_b}.
From the definition of $\Pvis$ with $\Pinv$ in Lemma \ref{lem:Prob}, the constraint (\ref{prob_const_a}) can be rewritten as
\begin{align}\label{prob_const_a_2}
    \dmax\ge \sqrt{a^2 + 4\re (\re +a)(1- (1 - \eta)^{\frac{1}{S}})}.
\end{align}
Since the derivative of $\themin$ with respect to $\dmax$, given by (\ref{eq:1st_der}) shown at the top of this page, is always negative, $\themin$ is a monotonically decreasing function of $\dmax$.
\setcounter{equation}{56}
Therefore, by combining \eqref{prob_const_a_2} with \eqref{eq:law_of_cosines}, we have
\begin{align}\label{eq:mu}
    \themin \le \sin^{-1}\left(\frac{2(\re +a)(1 - \eta)^{\frac{1}{S}} - 2r_
    {\mathrm{e}} - a}{\sqrt{a^2 + 4\re (\re +a)(1- (1 - \eta)^{\frac{1}{S}})}}\right) \delequal \mu.
\end{align}
We now decompose the problem (\ref{prob1}) into two sub-problems.
For a given $\themin=\bar{\theta}_{\mathrm{min}}$, the maximum value of feasible $R$, denoted by $R_{\mathrm{max}} (\bar{\theta}_{\mathrm{min}})$, can be obtained at a point where the equality of (\ref{prob_const_b}) holds, because the outage probability increases with $R$.
With the bounded feasible region, $0\le R \le R_{\mathrm{max}}(\bar{\theta}_{\mathrm{min}})$, we can solve the following problem:
\begin{subequations}\label{prob_sub1}
\begin{align} 
{\mathop {{\mathop{\mathrm{maximize}}\nolimits} }\limits_{{R}} }&\,\,\,\,\,T(R,\bar{\theta}_{\mathrm{min}}) \label{prob_sub1_obj_func}\\
{{\mathrm{subject \,\,to}}}&\,\,\,\,\,{0 \le {R} \le R_{\mathrm{max}} (\bar{\theta}_{\mathrm{min}}).}\label{prob_sub1_const_a}
\end{align}
\end{subequations}
Let ${R}^*$ denote the optimal solution of the problem (\ref{prob_sub1}). 
Then, for a given $R={R}^*$, the outage constraint \eqref{prob_const_b} gives a lower bound of feasible $\themin$, denoted by $\theta_{0} ({R}^*)$, because the outage probability decreases with $\themin$. With the bounded feasible region, $\themin$ can be optimized as
\begin{subequations}\label{prob_sub2}
\begin{align} 
{\mathop {{\mathop{\mathrm{maximize}}\nolimits} }\limits_{{\themin}} }&\,\,\,\,\,T({R}^*,\themin) \label{prob_sub2_obj_func}\\
{{\mathrm{subject \,\,to}}}&\,\,\,\,\,{{\theta_{0} ({R}^*) \le \themin} \le \mu.} \label{prob_sub2_const_a}
\end{align}
\end{subequations}

We propose an algorithm iteratively solving the problems \eqref{prob_sub1} and \eqref{prob_sub2} as illustrated in Algorithm \ref{algorithm}.
We first calculate the upper bound of $\themin$, $\mu$, using \eqref{eq:mu} and then initialize $\themin=\mu$ and $T_{\mathrm{max}}=0$. 
For a given $\themin=\mu$, we calculate $R_{\mathrm{max}}$ from \eqref{prob_const_b} and find the optimal $R^*$ using a numerical search within $R\in[0,R_{\mathrm{max}}]$.
Then, for a fixed $R=R^*$, we calculate the lower bound of $\themin$, $\theta_0$, from \eqref{prob_const_b} and obtain the optimal $\themin^*$ using a numerical search within $\themin\in [\theta_0,\mu]$.
We compute the throughput with the updated $R^*$ and $\themin^*$ and compare it with $T_{\mathrm{max}}$. We update $T_{\mathrm{max}}=T(R^*, \themin^*)$ and repeat alternatively solving the two subproblems until $T(R^*, \themin^*) \le T_{\mathrm{max}}$.

\begin{algorithm}[t]
\begin{algorithmic}[1]
\caption{Iterative algorithm for throughput maximization} \label{algorithm}
    \REQUIRE $b$, $m$, $\Omega$, $f_{\mathrm{c}}$, $S$, $a$, $\re $, $\alpha$, $c$, $P$, $N_0$, $W$, $g$, $\Gtml$, $\Gtsl$, $\omega_{\mathrm{th}}$, $\Grvmax$, $\omega_{\mathrm{e}}$, $\eta$, and $\varepsilon$.
    \ENSURE $R^*$ and $\themin^{*}$.
        \STATE Calculate $\mu$ using (\ref{eq:mu}).
        \STATE Initialize $\themin^{(0)}=\mu$, which satisfies (\ref{prob_const_b}), $\hat{T}=0$, and $i=0$.
        \REPEAT
            \STATE $T_{\mathrm{max}}\leftarrow \hat{T}$
            \STATE Calculate $R_{\mathrm{max}}$ such that $\Pout(R_{\mathrm{max}}, {\theta}_{\mathrm{min}}^{(i)})=\varepsilon$.
            \STATE $R^{(i+1)} \leftarrow {\mathop{\mathrm{argmax}}\nolimits}_{0 \le {R} \le R_{\mathrm{max}}} T(R,\themin^{(i)})$
            \STATE Calculate $\theta_0$ such that $\Pout(R^{(i+1)},\theta_0)=\varepsilon$.
            \STATE $\themin^{(i+1)} \leftarrow {\mathop{\mathrm{argmax}}\nolimits}_{\theta_0 \le \themin \le \mu} T(R^{(i+1)},\themin)$
            \STATE Calculate $\hat{T}=T(R^{(i+1)},\themin^{(i+1)})$.        
            \STATE $i \leftarrow i + 1$ 
        \UNTIL{$\hat{T} > T_{\mathrm{max}}$}
        \STATE $R^* \leftarrow R^{(i-1)}$
        \STATE $\themin^* \leftarrow \themin^{(i-1)}$
\end{algorithmic}
\end{algorithm}

\subsection{Complexity Analyses}
The computational complexity to solve the problem \eqref{prob1} highly depends on the analytical expressions of the probabilities for the serving satellite's distributions and the outage probability.
The exact probabilities in Lemma \ref{lem:Prob} require the complexity of $\mathcal{O}(S)$, while the approximated ones in Lemma \ref{lem:prob_approx} require $\mathcal{O}(1)$ since they do not include terms with the $S$-th power but exponential functions instead.
In (\ref{eq:Poutml_fin}) and (\ref{eq:Poutsl_fin}), the combination $\binom{S-1}{k}$ and the terms such as $(a+2\re)^{2(S-1-k)}(-1)^{k}$ and $\dth^{2(k+1)}$ require $\mathcal{O}(S)$.
Then, the complexity of $\Pout$ is given by $\mathcal{O}(S^2N^2\tau)$ where $\tau$ is the complexity for the lower incomplete Gamma function.
Similarly, the complexity of $\Poutapp$ is obtained as $\mathcal{O}(N^2\hat{\tau})$ where $\hat{\tau}$ is the computational complexity for $\Cml[n]$ and $\Csl[n]$.
When $\alpha=2$, the complexity of $\Poutapp$ becomes $\mathcal{O}(N^2\tau)$.
Therefore, $\Poutapp$ is significantly less complex than $\Pout$ especially for large~$S$, since the complexity of $\Poutapp$ does not depend on $S$.

Let $\Delta_R$ and $\Delta_\theta$ denote the search steps for the optimization variables $R$ and $\themin$, respectively. 
To obtain the computational complexity of the 2D exhaustive search, we use sufficiently large $\hat{R}$ as an upper-bound of the search region since the feasible region of $R$ is not bounded in the problem \eqref{prob1}.
Then, the complexity of the 2D exhaustive search is given by $\mathcal{O}(c_{\mathrm{out}} \lfloor {90^\circ /{\Delta _\theta }} \rfloor \lfloor {\hat{R}/{\Delta _R}} \rfloor )$ where $c_{\mathrm{out}}$ is the complexity for calculating the outage probability.
The complexity of Algorithm \ref{algorithm} is given by $\mathcal{O}(c_{\mathrm{out}} L (\left\lfloor {R_{\mathrm{max}}(\mu )/{\Delta _R}} \right\rfloor  + \left\lfloor {\mu /{\Delta _\theta }} \right\rfloor ))$ where $L$ is the number of iterations of Algorithm \ref{algorithm}. 
This proves that the approximated expression and iterative algorithm make it much easier to obtain the solution of the throughput maximization problem.

\begin{table}
    \caption{Simulation Parameters}\label{Table:Sim_Param}
    \centering
    \begin{tabular}{|l|c|c|}
     \hline 
     Parameter & VSAT & Handheld \\
     \hline\hline
     Carrier frequency $f_{\mathrm{c}}$ [GHz] & 20 & 2  \\
     \hline
     Radius of the earth $r_{\mathrm{e}}$ [km] & \multicolumn{2}{c|}{6,378}\\
     \hline
     Path-loss exponent $\alpha$ & \multicolumn{2}{c|}{2}\\
     \hline
     Speed of light $c$ [m/s] & \multicolumn{2}{c|}{$3 \times 10^8$}\\
     \hline
     Minimum elevation angle $\theta_{\mathrm{min}}$ [deg] & \multicolumn{2}{c|}{$10$}\\
     \hline
     Noise spectral density $N_0$ [dBm/Hz] & \multicolumn{2}{c|}{$-174$}\\
     \hline
     Threshold angle between main/side lobes [deg] & \multicolumn{2}{c|}{$20$}\\
     \hline
     Transmit antenna gain for main lobes $\Gtml$ [dBi] & 38.5 & 30\\
     \hline
     Transmit antenna gain for side lobes $\Gtsl$ [dBi] & 28.5 & 20\\
     \hline
     EIRP density [dBW/MHz] & 4 & 34\\
     \hline
     Maximum receive antenna gain $\Grvmax$ or $\Grh$ [dBi] & 39.7 & 0 \\
     \hline
     Bandwidth $W$ [MHz] & 100 & 10 \\
     \hline
    \end{tabular}
\end{table}

\section{Numerical Results}\label{sec:sim}
In this section, we numerically verify the derived results based on the simulation parameters listed in Table \ref{Table:Sim_Param} unless otherwise stated.
The VSATs are considered for the Ka-band and the handheld terminals are targeted for the S-band as in the 3GPP standardization [\ref{Ref:3GPP_38.821}].
The effective isotropically radiated power (EIRP) density is calculated as
$P\Gtml/W$, from which the transmit power of the satellites can be obtained.
Three different shadowed-Rician fading models are taken into consideration: frequent heavy shadowing (FHS) $\{b=0.063,\ m=0.739,\ \Omega=8.97\times10^4\}$, average shadowing (AS) $\{b=0.126,\ m=10.1,\ \Omega=0.835\}$, and infrequent light shadowing (ILS) $\{b=0.158,\ m=19.4,\ \Omega=1.29\}$~[\ref{Ref:Abdi}].

Fig. \ref{Fig:Pcases_vs_S} shows the probabilities of three distribution cases for the serving satellite versus the number of satellites $S$.
As expected, for small $S$, the satellite-invisible probability is larger than the others, while as $S$ increases, the serving satellite is more probably located in $\Aml$. 
As $S$ increases, the probability that the serving satellite is in $\Asl$ first increases and then decreases.
This is because the surface area of $\Asl$, $\mathcal{S}({\Asl})=1.14 \times 10^7$ $\mathrm{km}^2$, is much larger than that of $\Aml$, $\mathcal{S}({\Aml})=1.82 \times 10^5$ $\mathrm{km}^2$.
From this fact, for small $S$, the serving satellite is more likely to be located in $\Asl$, while, for large $S$, $\Aml$ more probably includes the serving satellite.

Fig. \ref{Fig:Pvis_vs_theta_min} shows the satellite-visible probability $\Pvis$ versus the minimum elevation angle $\themin$ for various altitudes $a=\{300, 600, 1200\}$ km with $S=100$.
The exact expression of the satellite-visible probability perfectly matches the simulation results, and the approximated expression is very close to the exact one. 
As $\themin$ increases or $a$ decreases, the satellite-visible probability decreases because the surface area of $\Avis$ becomes smaller.
The required minimum elevation angle to achieve a certain level of satellite-visibility can be obtained from Fig. \ref{Fig:Pvis_vs_theta_min}.
For example, to achieve 90\% satellite-visible probability with $S=100$, $\themin$ should be less than or equal to $\{7.7, 20.7\}$ degrees for the altitude $\{600, 1200\}$ km, respectively, while this cannot be achieved with any $\themin$ for the altitude of $300$~km.

\begin{figure}
\begin{center}
\includegraphics[width=.885\columnwidth]{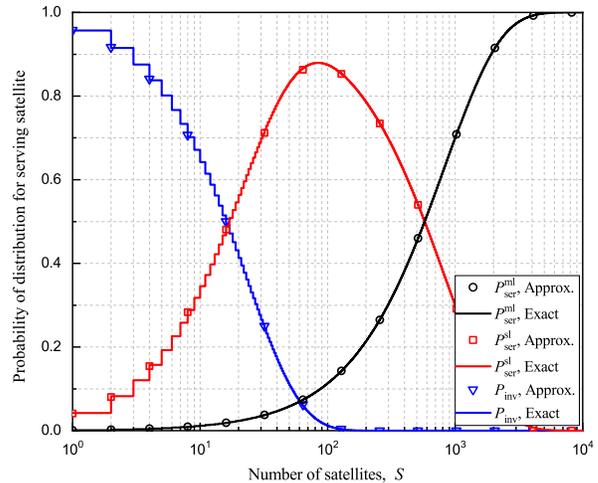}
\end{center}
\setlength\abovecaptionskip{.25ex plus .125ex minus .125ex}
\setlength\belowcaptionskip{.25ex plus .125ex minus .125ex}
\caption{Probabilities of the serving satellite's distributions versus the number of satellites $S$ with $a=1200$ km.}
\vspace{10pt}
\label{Fig:Pcases_vs_S}
\end{figure}

\begin{figure}
\begin{center}
\includegraphics[width=.885\columnwidth]{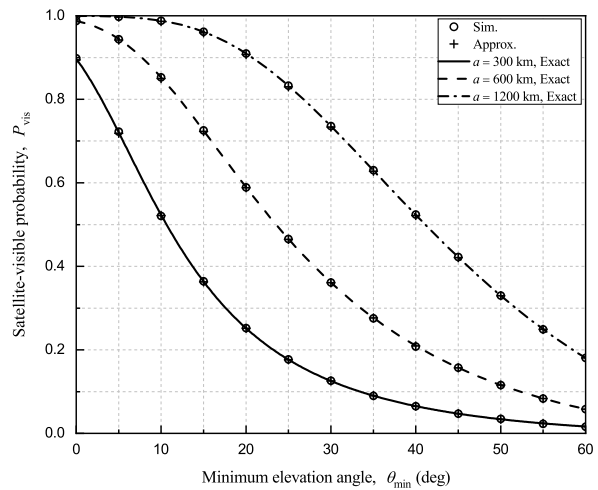}
\end{center}
\setlength\abovecaptionskip{.25ex plus .125ex minus .125ex}
\setlength\belowcaptionskip{.25ex plus .125ex minus .125ex}
\caption{Satellite-visible probability versus the minimum elevation angle $\themin$ for various altitudes $a=\{300, 600, 1200\}$ km with $S=100$.}
\vspace{10pt}
\label{Fig:Pvis_vs_theta_min}
\end{figure}

\begin{figure*}
\centering
\subfigure[VSAT]{
\includegraphics[width=.885\columnwidth]{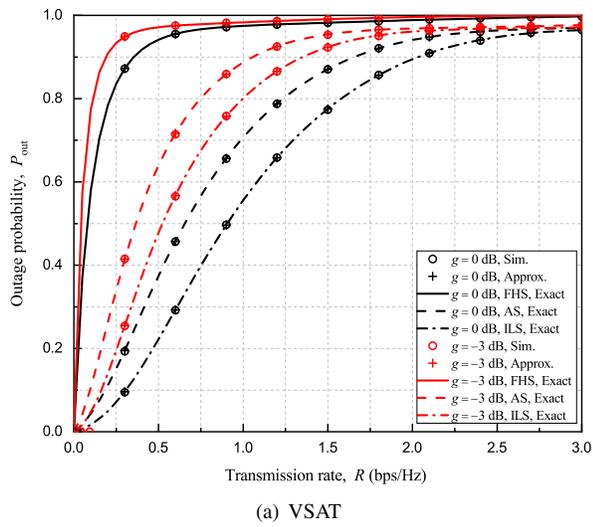}
\label{Fig:Pout_vs_R_VSAT}
}
\hspace{25pt}
\subfigure[Handheld terminal]{
\includegraphics[width=.885\columnwidth]{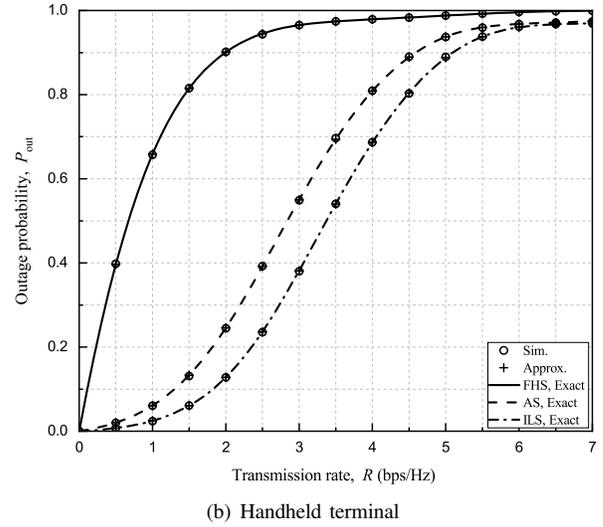}
\label{Fig:Pout_vs_R_HH}
}
\caption{Outage probability versus the transmission rate for various shadowed-Rician fading scenarios with $a=600$ km and $S=100$.}
\label{Fig:Pout_vs_R}
\end{figure*}

\begin{figure*}
\centering
\subfigure[VSAT]{
\includegraphics[width=.885\columnwidth]{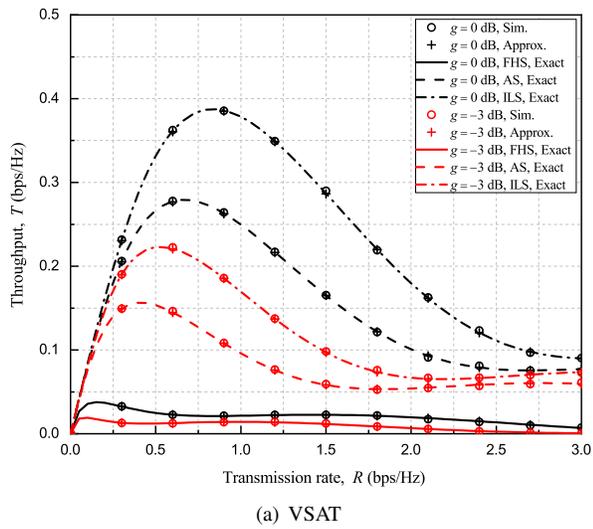}
\label{Fig:Thru_vs_R_VSAT}
}
\hspace{25pt}
\subfigure[Handheld terminal]{
\includegraphics[width=.885\columnwidth]{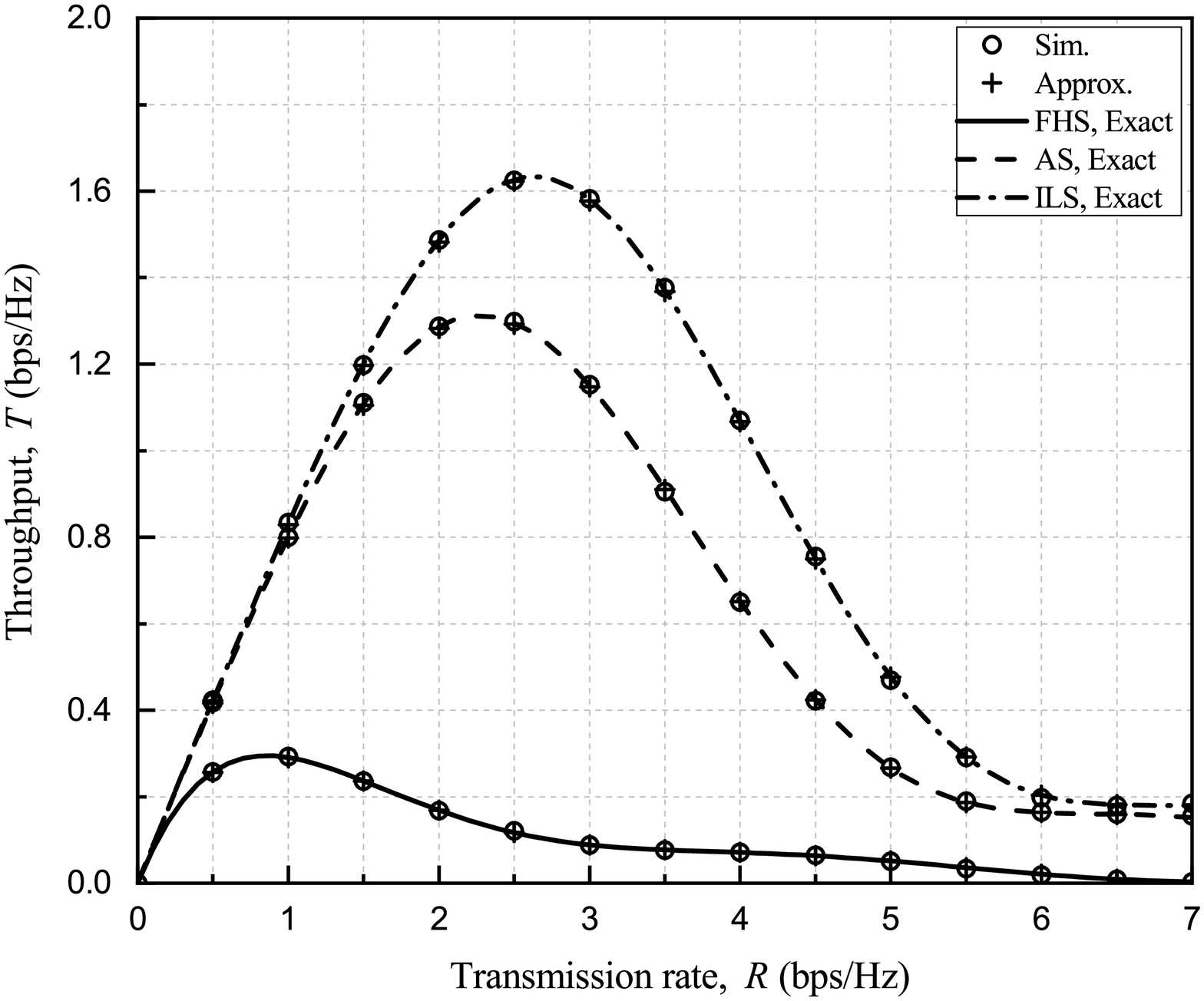}
\label{Fig:Thru_vs_R_HH}
}
\caption{System throughput versus the transmission rate for various shadowed-Rician fading scenarios with $a=600$ km and $S=100$.}
\label{Fig:Thru_vs_R}
\end{figure*}

\begin{figure}
\begin{center}
\includegraphics[width=.885\columnwidth]{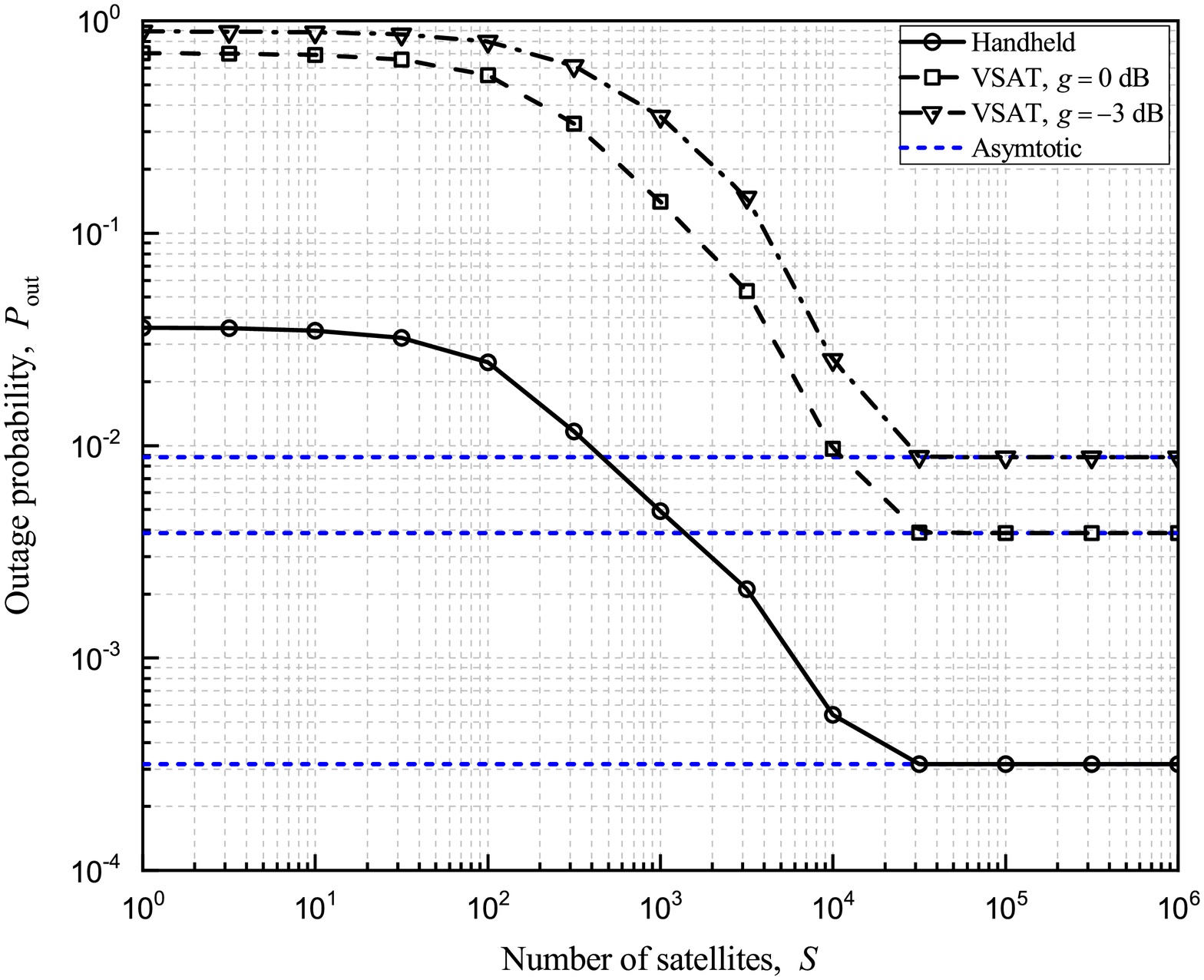}
\end{center}
\setlength\abovecaptionskip{.25ex plus .125ex minus .125ex}
\setlength\belowcaptionskip{.25ex plus .125ex minus .125ex}
\caption{Outage probability versus the number of satellites $S$ for the ILS with $a=600$ km and $R=1$ bps/Hz.}
\vspace{10pt}
\label{Fig:Pout_vs_S}
\end{figure}

\begin{figure}
\begin{center}
\includegraphics[width=.885\columnwidth]{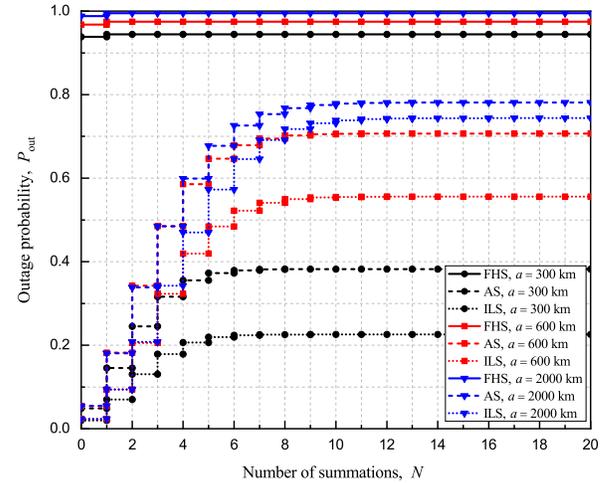}
\end{center}
\setlength\abovecaptionskip{.25ex plus .125ex minus .125ex}
\setlength\belowcaptionskip{.25ex plus .125ex minus .125ex}
\caption{Outage probability versus the limited number of the summations in \eqref{eq:CDF_ch_gain}, $N$, for the VSAT with $g=0$ dB and $S=100$.}
\vspace{10pt}
\label{Fig:Pout_vs_N}
\end{figure}

\begin{figure}
\begin{center}
\includegraphics[width=.885\columnwidth]{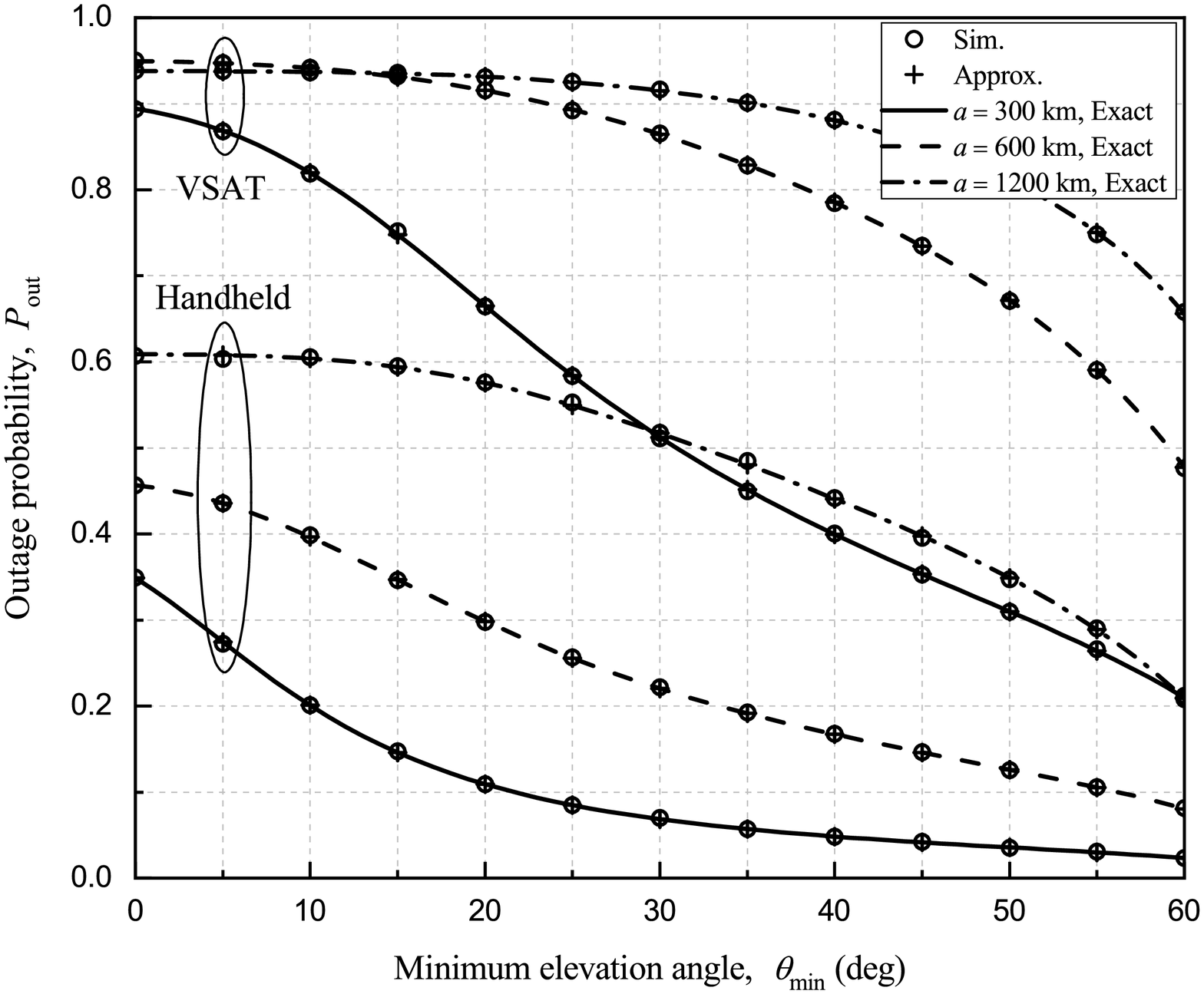}
\end{center}
\setlength\abovecaptionskip{.25ex plus .125ex minus .125ex}
\setlength\belowcaptionskip{.25ex plus .125ex minus .125ex}
\caption{Outage probability versus the minimum elevation angle $\themin$ for various altitudes $a=\{300, 600, 1200\}$ km and the FHS with $S=100$, $R=0.5$ bps/Hz, and $g=-3$ dB.}
\vspace{10pt}
\label{Fig:Pout_vs_theta_min}
\end{figure}

Fig. \ref{Fig:Pout_vs_R} shows the outage probability of the system versus the transmission rate $R$ for the VSAT and handheld terminal under various fading scenarios.
The analytical results of the exact and approximated outage probabilities are obtained from Theorems \ref{thm:Pout} and \ref{thm:Poutapprox}, respectively.
The exact outage probability well matches the simulation results, and the approximated outage probability is also very close.
As $R$ increases, the outage probability increases, which is expected from the definition of the outage probability. 
As the channel experiences more severe shadowing, the outage probability becomes larger due to the lower received SNR.
For the VSAT, as rain attenuation becomes severe, the outage probability increases, which is a major drawback of the Ka band instead of using wide bandwidth.

Fig. \ref{Fig:Thru_vs_R} shows the system throughput versus the transmission rate $R$ for the VSAT and handheld terminal under various fading scenarios.
The throughput curves obtained from the exact analysis match the simulation results, and the approximated results are almost the same as the exact ones. 
The handheld terminal has the higher system throughput than the VSAT due to the better characteristics of the S-band.
However, since we can usually use wider bandwidth in the Ka-band compared to the S-band, the throughput in bps of the VSAT can be much higher than that of the handheld terminal.

Fig. \ref{Fig:Pout_vs_S} shows the outage probability versus the number of satellites $S$ for the ILS scenario with $a=600$ km and $R=1$ bps/Hz where the asymptotic bounds are from Corollary 2.
As $S$ increases, the outage probability first decreases and then stays constant.
This is because as $S$ increases, the distance to the serving satellite decreases and then becomes close to the minimum distance $a$, as proved in Corollary \ref{cor:PoutapproxAsym}.

Fig. \ref{Fig:Pout_vs_N} shows the outage probability when the number of the summations in \eqref{eq:CDF_ch_gain} is limited to $N$. 
The outage probability converges as $N$ increases, as proven in Section~\ref{sec:app}.
Since a few tens of summations seem to be enough to converge, the computational complexity for the exact outage probability, i.e., $\mathcal{O}(S^2N^2\tau)$, is mainly affected by the number of satellites $S$.
In addition, the complexity of the approximated outage probability, i.e., $\mathcal{O}(N^2\tau)$, can be very small.

Fig. \ref{Fig:Pout_vs_theta_min} shows the outage probability $\Pout$ versus the minimum elevation angle $\themin$ for various altitudes $a=\{300, 600, 1200\}$ km with $S=100$, $R=0.5$ bps/Hz, and $g=-3$ dB.
As $\themin$ increases, the outage probability decreases because the channel quality between the terminal and the serving satellite becomes better.
It is shown that the handheld terminal has better outage performance than the VSAT because the S-band has less limitation on the EIRP density and experiences less path-loss compared to the Ka-band.

Fig. \ref{Fig:Thrumax_vs_S} shows the maximum system throughput for the VSATs with beam-pointing errors $\omega_\mathrm{e}=\{0,1\}^\circ$, $g=\{-3,0\}$ dB, $a=600$ km, $\eta=0.9$, and $\varepsilon=0.1$.
The optimal system throughput $T^*$ is obtained using 2D exhaustive search, while the sub-optimal solutions are given by Algorithm \ref{algorithm}.
The proposed algorithm has very close performance to the 2D exhaustive search.
The maximum system throughput also increases with $S$ because the satellite-visible probability and the outage probability are increasing and decreasing functions of $S$, respectively.
The high beam-pointing error, e.g., $1^\circ$, and sever rain attenuation, e.g., $g=-3$ dB, degrade the quality of received signals so that the system throughput is reduced.

\begin{figure}
\begin{center}
\includegraphics[width=.885\columnwidth]{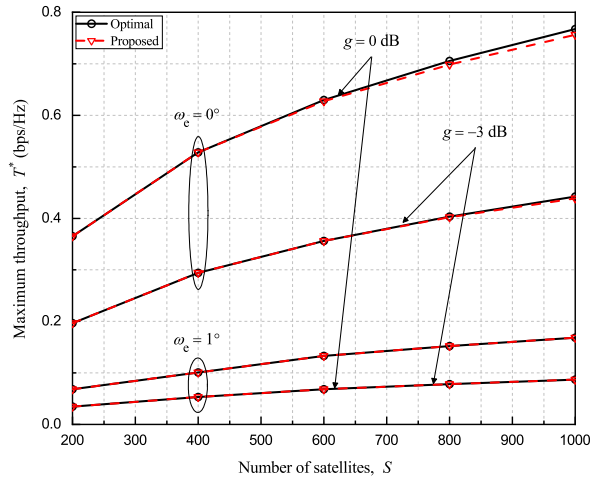}
\end{center}
\setlength\abovecaptionskip{.25ex plus .125ex minus .125ex}
\setlength\belowcaptionskip{.25ex plus .125ex minus .125ex}
\caption{Maximum system throughput versus the number of satellites $S$ for VSAT with $g=\{-3, 0\}$ dB, $\omega_{\mathrm{e}}=\{0,1\}^\circ$, $a=600$ km, $\eta=0.9$, and $\varepsilon=0.1$.}
\label{Fig:Thrumax_vs_S}
\end{figure}

\section{Conclusions}\label{sec:con}
In this paper, we considered downlink LEO satellite communication systems where multiple LEO satellites are uniformly distributed at a certain altitude. 
We analyzed the distance distributions and the probabilities of distribution cases for the serving satellite.
We derived the exact outage probability, and the approximated expression was obtained by using the Poisson limit theorem. 
With the derived expressions, we optimized the transmission rate and the minimum elevation angle to maximize the system throughput by the proposed iterative algorithm.
The complexity of the proposed algorithm and exhaustive search was compared.
Simulation results verified the exact and approximated analyses and showed that the proposed algorithm has close performance to the optimum.
The approximated expressions are expected to be used with very high accuracy but low complexity to analyze satellite communication systems having more than thousands of satellites, which will be realized in the near future.

\appendices
\section{Proof of Lemma \ref{lem:Prob}}\label{app:lem_prob}
Using the success probability, the void probability for a region $\mathcal{X}$, i.e., the probability that there is no satellite in $\mathcal{X}$, is given by [\ref{Ref:Book:Chiu}]
\begin{align}\label{eq:void_prob}
\PvoidBPP(\mathcal{X})
    = \prod_{s\in\BPP}\left(1-\frac{\mathcal{S}(\mathcal{X})}{\SBPP}\right)
    = \left(1-\frac{\mathcal{S}(\mathcal{X})}{\SBPP}\right)^S.
\end{align}
The probability of Case 1 is the probability that there is at least one satellite in $\Aml$, which is given by 
\begin{align}
\Pserml
    = 1 - \P[\BPPml = \emptyset]
    = 1 - \PvoidBPP(\Aml).
\end{align}
The probability of Case 2 is the probability that there is no satellite in $\Aml$ but at least one satellite in $\Asl$, which is given by
\begin{align}
\Psersl
    &=\P[\BPPml = \emptyset, \BPPsl \neq \emptyset]\nonumber\\
    &= \P[\BPPml = \emptyset] - \P[\BPPml = \emptyset, \BPPsl = \emptyset]\nonumber\\
    &=\PvoidBPP(\Aml) - \PvoidBPP(\Avis).
\end{align}
The probability of Case 3 is called the satellite-invisible probability, i.e., the probability that all satellites are in $\Avis^\mathrm{c}$, which is given by 
\begin{align}
\Pinv
    =\P[\BPPvis = \emptyset]
    = \PvoidBPP(\Avis).
\end{align}
Using \eqref{eq:void_prob} with the surface areas obtained in Section \ref{sec:distDist:surf}, the probabilities of the three cases can be obtained.

\section{Proof of Theorem \ref{thm:Poutapprox}}\label{app:thm_pout_approx}
Similar to the derivation of the exact outage probability, the approximated outage probability of the system is given in \eqref{eq:Poutapp_fin}.
Using $\bar{f}_Y(x)$, $\Poutmlapp$ in \eqref{eq:Poutapp_fin} is given by
\begin{align}\label{eq:Poutmlapp1}
\Poutmlapp
    & = \int_{a}^{\dth} F_{h_{\s0}}(w_1 x^{\alpha}) \bar{f}_Y(x)dx \nonumber\\
    & = \frac{K S}{2\re(\re+a)\bar{F}_D(\dth)}\sum\limits_{n = 0}^\infty  {\frac{{{{(m)}_n}{\delta ^n}{{(2b)}^{1 + n}}}}{{{{(n!)}^2}}}} \nonumber\\
    &\qt  \int_{a}^{\dth} \gamma\left(1+n,\frac{w_1 x^{\alpha}}{2b}\right)  e^{-\kappa(x)} x dx.
\end{align}
By letting $\kappa(x)=\frac{S(x^2-a^2)}{4\re(\re+a)} \delequal y$, the integral in \eqref{eq:Poutmlapp1} can be expressed as 
\begin{align}\label{eq:integral2}
&\int_{a}^{\dth} \int_0^{\frac{w_1 x^{\alpha}}{2b}} t^n e^{-t-\kappa(x)} x dt dx \nonumber\\
    &= \frac{2\re(\re+a)}{S} \!\!\int_{0}^{\kappa(\dth)} \int_0^{\frac{w_1}{2b}(\kappa^{-1}(y))^\alpha} t^n e^{-t-y} dt dy
\end{align}
where $\kappa^{-1}(y)=\sqrt{\frac{4\re(\re+a)y}{S}+a^2}$.
The domain of the integration in (\ref{eq:integral2}) is shown as the shaded areas in Fig. \ref{Fig:Region2} and can be divided into two domains $\mathcal{D}_3$ and $\mathcal{D}_4$, which are given by 
$
\mathcal{D}_3=  \left\{0 \le t \le \frac{w_1 a^{\alpha}}{2b}, 0 \le y \le \kappa(\dth)\right\}    
$
and 
$
\mathcal{D}_4= \left\{\frac{w_1 a^{\alpha}}{2b} \!\le t \le \!\frac{w_1 \dth^{\alpha}}{2b}, \kappa\left(\left(\frac{2 b t}{w_1}\right)^{\frac{1}{\alpha}}\right) \le \!y\! \le \kappa(\dth)\right\}
$,
respectively.
The integral over the domain $\mathcal{D}_3$ is given by 
\begin{align}\label{eq:integral_s3}
\mathcal{I}_{\mathcal{D}_3}
    &= \int_{0}^{\kappa(\dth)} e^{-y} dy \times \int_0^{\frac{w_1 a^{\alpha}}{2b}} t^n e^{-t} dt\nonumber\\
    &= (1-e^{-\kappa(\dth)}) \gamma\left(1+n,\frac{w_1 a^{\alpha}}{2b}\right),
\end{align}
and the integral over the domain $\mathcal{D}_4$ is given by
\begin{align}\label{eq:integral_s4}
\mathcal{I}_{\mathcal{D}_4}
    &=\int_{\frac{w_1 a^{\alpha}}{2b}}^{\frac{w_1 \dth^{\alpha}}{2b}} \int_{\kappa\left(\left(\frac{2 b t}{w_1}\right)^{1/\alpha}\right)}^{\kappa(\dth)} t^n e^{-t-y} dy dt \nonumber\\
    &=\int_{\frac{w_1 a^{\alpha}}{2b}}^{\frac{w_1 \dth^{\alpha}}{2b}} t^n e^{-t} \left(e^{-\kappa\left(\left(\frac{2 b t}{w_1}\right)^{1/\alpha}\right)} - e^{-\kappa(\dth)}\right) dt \nonumber\\
    &=\int_{\frac{w_1 a^{\alpha}}{2b}}^{\frac{w_1 \dth^{\alpha}}{2b}} t^n e^{-t-\kappa\left(\left(\frac{2 b t}{w_1}\right)^{1/\alpha}\right)}dt - e^{-\kappa(\dth)}\int_{\frac{w_1 a^{\alpha}}{2b}}^{\frac{w_1 \dth^{\alpha}}{2b}} t^n e^{-t} dt \nonumber\\
    &=e^{\frac{S a^2}{4\re(\re+a)}} \mathcal{C}_\mathrm{ml}[n] - e^{-\kappa(\dth)} \nonumber\\
    &\qt \left(\gamma\left(1+n, \frac{w_1 \dth^{\alpha}}{2b}\right) - \gamma\left(1+n, \frac{w_1 a^{\alpha}}{2b}\right)\right)
\end{align}
where $\mathcal{C}_\mathrm{ml}[n]$ is given in \eqref{eq:Cml}. From \eqref{eq:Poutmlapp1}, \eqref{eq:integral2}, \eqref{eq:integral_s3}, and \eqref{eq:integral_s4}, the final expression of $\Poutmlapp$ can be obtained. 
The derivation of $\Poutslapp$ can be done by the similar steps as those of $\Poutmlapp$, so omitted due to space limitation. Using $\Poutmlapp$ and $\Poutslapp$ with the results in Lemma \ref{lem:prob_approx}, the final expression in Theorem \ref{thm:Poutapprox} can be obtained.

\begin{figure}
\centering
\includegraphics[width=.8\columnwidth]{}
\caption{Domain of the integration in \eqref{eq:integral2}.}
\label{Fig:Region2}
\end{figure}

\ifCLASSOPTIONcaptionsoff
  \newpage
\fi

\end{document}